\documentclass[journal]{IEEEtran}
\usepackage{amsmath,amssymb,amsthm,bm,mathtools,mathrsfs}
\usepackage{graphicx}
\usepackage{enumitem}
\usepackage{mathrsfs}
\usepackage{algorithm}
\usepackage{algpseudocode} 
\usepackage{stfloats}
\usepackage{graphicx}
\usepackage{cite}
\usepackage{verbatim}
\usepackage{enumitem}
\usepackage[caption=false,font=footnotesize]{subfig}
\DeclareMathOperator{\tr}{tr}

\newtheorem{lemma}{Lemma}

\newtheorem{proposition}{Proposition}
\newtheorem{theorem}{Theorem}
\newtheorem{remark}{Remark}

\graphicspath{{code/Fig/}}

\begin{document}
	\title{Antenna Placement for Monostatic Near-Field Wireless Sensing}
	\author{
		Jinjian Liu,~\IEEEmembership{Graduate Student Member,~IEEE}, 
		Xianxin~Song,~\IEEEmembership{Member,~IEEE}, 
		and Xianghao~Yu,~\IEEEmembership{Senior Member,~IEEE}
		\thanks{This paper has been presented in part at the IEEE International Mediterranean Conference on Communications and Networking (MeditCom), Cagliari, Italy, July 2026 \cite{liumeditcom2026}.}
		\thanks{Jinjian Liu, Xianxin Song, and Xianghao Yu are with the Department of Electrical Engineering, City University of Hong Kong, Hong Kong, China (e-mail: jinjian.liu@my.cityu.edu.hk, xianxin.song@cityu.edu.hk, and alex.yu@cityu.edu.hk). Xianxin Song is the corresponding author.}
	}
	
	\maketitle
	\begin{abstract}	
		The emergence of movable antenna technology enables flexible antenna placement, allowing transceiver antennas to more effectively exploit spatial degrees of freedom. In this paper, we investigate a monostatic near-field sensing system, aiming to minimize the worst-case squared position error bound (SPEB) over the entire near-field region by jointly optimizing the transmit and receive antenna placements, along with the transmit power allocation. Toward this end, we first derive the closed-form expression of SPEB by calculating the Cram{\'e}r-Rao bound (CRB) for estimating the target's angle and distance. It is shown that the derived SPEB is governed by the power-weighted moments of the transmit antenna locations and the moments of the receive antenna locations, motivating the development of a moment-based optimization algorithm. Functional analysis of the SPEB's moment structure proves that the optimal antenna distributions can be realized by a centro-symmetric structure, and shows that the worst-case target is located at the array broadside on the Rayleigh boundary. Moreover, by leveraging moment-based analysis and the Richter-Tchakaloff theorem, we derive closed-form antenna distributions with three points supported at the aperture center and two edges. Specifically, the optimal transmit antenna distribution can be exactly realized by activating only three transmit antennas while the optimal receive antenna distribution consists of three clusters. Numerical results show that the proposed closed-form design significantly outperforms conventional antenna placements and power allocation benchmark schemes.
	\end{abstract}
	\begin{IEEEkeywords}
		Antenna placement, near-field wireless sensing, Cram{\'e}r-Rao bound, Richter-Tchakaloff theorem.
	\end{IEEEkeywords}
	\section{Introduction}
	
	The forthcoming sixth generation (6G) of wireless networks is envisioned to support emerging applications such as unmanned aerial vehicles (UAVs), vehicle-to-everything (V2X), and smart cities \cite{ChenJCIN2025, JornetProIEEE2025}. These applications require not only reliable connectivity but also timely and accurate perception of the surrounding physical environment. In particular, precise target localization and motion tracking are essential for autonomous UAV navigation, collision avoidance, and cooperative driving in V2X systems, as well as traffic monitoring and infrastructure management in smart cities. By extracting location, motion, and environmental information from echo signals, wireless sensing can provide timely and wide-area situational awareness for these emerging applications. Consequently, wireless sensing is expected to become a key service of 6G wireless networks, imposing stringent requirements on sensing performance \cite{DuCST2025}.
	
	Wireless sensing systems commonly rely on multiple-input multiple-output (MIMO) architectures to exploit the spatial phase diversity provided by multiple antennas \cite{LiTSP2008, BekkermanTSP}. As MIMO sensing systems evolve toward higher carrier frequencies and larger array apertures, the near-field region expands to cover practically relevant sensing distances, rendering near-field effects non-negligible \cite{MaICST2026, liuOJCS2023}. In the near-field region, the conventional far-field plane-wave assumption becomes inaccurate because it
	neglects the wavefront curvature and the resulting range-dependent
	phase differences across the array. 
	To accurately characterize near-field propagation, spherical wavefronts are required, in which the spatial phase variations across the array depend jointly on the target's angle and range.
	These angle- and range-dependent responses provide more geometric information for target estimation, enabling joint angle-range estimation and thereby facilitating target localization. However, conventional MIMO sensing systems typically employ uniform antenna architectures, such as uniform linear arrays (ULAs) and uniform planar arrays (UPAs).
	Such uniform antenna configurations may create redundant spatial signatures across antennas, leaving the additional spatial degrees of freedom (DoFs) offered by flexible antenna placement largely unexploited \cite{MaTWC0824}.
	
	To address the above limitations, movable antenna (MA) technology has recently attracted significant attention in both wireless communication and sensing systems \cite{ZhuCST, ZhuCM}. By enabling flexible adjustment of antenna positions within a given spatial region, MA system provides the possibility of non-uniform antenna placement \cite{ZhuCST}.  This flexibility enables the sensing system to fully exploit the spatial DoFs arising from variations in the received signals across different antennas, thereby enhancing sensing performance. Similarly, related reconfigurable-antenna architectures, such as fluid antennas (FA) \cite{NewCST2025} and pinching antennas \cite{ZhengWC2026}, have also been investigated to achieve enhanced spatial reconfigurability.
	
	Recently, the antenna placement design has been widely exploited to enhance the performance of wireless sensing and integrated sensing and communications (ISAC) in the near-field region \cite{GazzahTAP2014, wang202512arxiv, DingTWC2026, SunIoTJ2025, ZhouGlobalcom2024, YangTCCN1025}. 
	Specifically, the authors in \cite{GazzahTAP2014} considered a single source sensing scenario using
	a linear receive array. They derived the Cram{\'e}r-Rao bound (CRB) expressions for estimating angle and range with respect to (w.r.t.) antenna positions, and numerically demonstrated that centro-symmetric array geometries can reduce both range and angle estimation errors compared with traditional ULA. 
	In \cite{wang202512arxiv}, the authors investigated single-source sensing using a non-uniform array and optimized the antenna positions to minimize the worst-case CRB for angle and range estimation by utilizing an iterative algorithm. In \cite{DingTWC2026}, the authors considered a full-duplex near-field ISAC system with multiple transmit and receive antennas, where the transmit beamformers, sensing-signal covariance matrices, receive beamformers, transceiver antenna positions, and uplink power allocation were jointly optimized to maximize a weighted sum of sensing and communication rates. In \cite{SunIoTJ2025}, the authors introduced rotatable MAs into near-field ISAC to exploit both position and orientation DoFs, in which the transmit beamformers, antenna positions, and rotations were jointly designed under angle-range CRBs and communication-rate metrics. 
	In \cite{ZhouGlobalcom2024}, the authors studied an FA-assisted near-field ISAC system enabled by a reflecting surface, where the communication beamformers, sensing-signal covariance matrix, reflecting coefficients, and FA position vector were jointly optimized to minimize the sum CRB for estimating the target's range and angle, subject to communication and power constraints by utilizing a double-loop iterative algorithm.
	In \cite{YangTCCN1025}, an FA-enabled near-field integrated sensing, computing, and semantic communication (NF-ISCSC) framework was investigated, where the communication beamformers, transmit covariance matrix, FA positions, and semantic extraction ratios were jointly optimized to maximize the worst-case semantic secrecy rate under sensing accuracy, computation, latency, and power constraints via an alternating optimization (AO) approach. 
	
	Nevertheless, prior studies on antenna placement design for near-field wireless sensing \cite{GazzahTAP2014, wang202512arxiv} and ISAC \cite{DingTWC2026, SunIoTJ2025, YangTCCN1025, ZhouGlobalcom2024} have left several important issues unresolved.
	First, existing studies on antenna placement for near-field sensing \cite{wang202512arxiv} and ISAC \cite{DingTWC2026, SunIoTJ2025, YangTCCN1025, ZhouGlobalcom2024} mainly rely on iterative algorithms for antenna position optimization, lacking optimal array geometry analysis and normally incur high computational complexity, especially for large-scale arrays. 
	Second, existing research on antenna placement for near-field wireless sensing and ISAC mainly focuses on single-sided array geometry optimization, considering either transmit-side array geometry designs \cite{YangTCCN1025} or receive-only array geometry designs \cite{GazzahTAP2014, wang202512arxiv}. Joint transmit/receive design is more challenging because the two array geometries jointly affect the angle
	and range information, while the transmit geometry is
	also coupled with the power allocation. 
	Although joint transmit/receive
	antenna position optimization has been investigated in \cite{DingTWC2026, SunIoTJ2025}, these works considered transmit and receive arrays deployed in bistatic ISAC systems and do not directly analyze target-position estimation errors. Specifically, \cite{DingTWC2026} adopted sensing mutual information as the
	sensing metric, which does not characterize target estimation accuracy, whereas \cite{SunIoTJ2025} minimized the sum of the CRBs for angle and range estimation, resulting in a physically inconsistent metric because the two quantities have different units.
	Consequently, the joint design of the antenna placements and the transmit power allocation for directly enhancing target's location estimation accuracy in monostatic near-field sensing system remains unexplored.
	
	To address these issues, we consider a monostatic near-field wireless sensing system consisting of a linear transmit array, a linear receive array, and a single target. The transmit antennas emit orthogonal waveforms over multiple snapshots. Based on this model, we investigate the joint design of transmit and receive antenna placements as well as the transmit power allocation, aiming to improve the worst-case target position estimation performance, i.e., to minimize the worst-case squared position error bound (SPEB).
	The main contributions of the paper are summarized as follows:
	\begin{itemize}
		\item First, we derive closed-form CRB expressions for joint angle and range estimation in a monostatic near-field sensing system. Considering the inherent unit inconsistency between the CRBs for estimating angle and range, we introduce the SPEB as the sensing performance metric and derive its implicit expression.
		\item Next, we formulate a worst-case SPEB minimization problem by jointly optimizing the transmit/receive array placements and transmit power allocation. To facilitate the analysis, we temporarily relax the minimum inter-spacing constraints of adjacent transmit/receive antennas and reformulate a distribution-based optimization problem. To solve this problem, we first employ functional analysis to prove that the optimal distributions can be chosen to be centro-symmetric and accordingly obtain the worst-case target location. Then, by leveraging moment-based analysis and the Richter-Tchakaloff theorem, we show that the optimal distributions admit a three-point structure supported at the aperture center and its two edges.
		\item  Then, we prove that the optimal transmit-side distribution can be exactly realized by activating only three transmit antennas located at the center and two edges with a closed-form power allocation policy that assigns equal power to the edge antennas and the remaining power to the center antenna. This observation can significantly reduce the number of active radio frequency (RF) chains and the associated hardware cost in practical implementations. At the receiver, we develop an efficient three-cluster deployment strategy that approaches the optimal antenna distribution while satisfying the minimum inter-element spacing constraints.
		\item Finally, we present numerical results to validate the proposed closed-form design. The mean squared error (MSE) achieved by the maximum likelihood estimator (MLE) is shown to be identical to the derived SPEB in the high signal-to-noise ratio (SNR) regime, which validates the SPEB derivation. In addition, the proposed joint design consistently outperforms conventional array geometries and power allocation schemes under various SNRs, aperture sizes, and numbers of antennas with negligible additional computational cost, demonstrating the effectiveness of the proposed design strategy. 
	\end{itemize}
	\textit{Notations:}
	We use normal-face letters to denote scalars,
	boldface lowercase and uppercase letters to denote column vectors and matrices, respectively.
	The imaginary unit is denoted by $\jmath = \sqrt{-1}$.
	The real-part operator is denoted by $\Re\{\cdot\}$.
	The nearest-integer rounding operator is denoted by
	$\operatorname{round}(\cdot)$.
	For a vector \(\mathbf z\), \(z_i\) denotes its \(i\)-th entry, and
	\(\operatorname{diag}(\mathbf z)\) denotes the diagonal matrix with
	\(\mathbf z\) on its main diagonal. 
	For a matrix \(\mathbf Z\),
	\(\operatorname{tr}(\mathbf Z)\) denotes its trace, and $[\mathbf Z]_{i:j,k:\ell}$ denotes the
	submatrix formed by rows $i$ through $j$ and columns $k$ through $\ell$.
	The Kronecker product, transpose, conjugate transpose, probability, expectation, variance and covariance operations are denoted by $\otimes$, $(\cdot)^{\mathrm{T}}$, $(\cdot)^{\mathrm{H}}$, $\Pr(\cdot)$, $\mathbb{E}(\cdot)$, $\mathrm{Var}(\cdot)$ and $\mathrm{Cov}(\cdot,\cdot)$, respectively.
	The $N\times N$ identity matrix is denoted by $\mathbf I_N$.
	The set of $M \times N$ real and imaginary matrices is denoted by $\mathbb{R}^{M \times N}$ and $\mathbb{C}^{M \times N}$, respectively.
	The set of \(N\)-dimensional non-negative real vectors is denoted by \(\mathbb{R}_{+}^N\).
	The set of complex numbers is denoted by $\mathbb{C}$.
	The notation $\arg\max$ denotes the set of maximizers.
	A circularly symmetric complex Gaussian vector with mean $\bm \mu$ and covariance $\mathbf A$
	is denoted by $\mathcal{CN}(\bm \mu,\mathbf A)$.
	The space of real-valued $\vartheta$-integrable functions on a measure
	space $(\mathcal{X},\vartheta)$ is denoted by
	$L^1_{\mathbb{R}}(\mathcal{X},\vartheta)$.
	The set of $n\times n$ real symmetric positive definite matrices is denoted by $\mathbb{S}^{n}_{++}$,
	and $\preceq$/$\succeq$ denotes the Loewner partial order.
	\section{System Model}
	We consider a single-target monostatic near-field sensing scenario, as illustrated in Fig.~\ref{fig:nearfield_scenario}, which consists of a transmit array with $M$ antennas and a receive array with $N$ antennas. The two
	arrays are physically distinct but colocated on the same sensing
	platform \cite{LiTSP2008, WerfTSP2026}.
	In this work, we focus on a near-field target sensing scenario, motivated by the large aperture MIMO arrays and high-frequency transmitted signals \cite{wang202512arxiv}. 
	\begin{figure}[t]
		\centering
		\includegraphics[width=1\linewidth, height=1\linewidth, keepaspectratio]{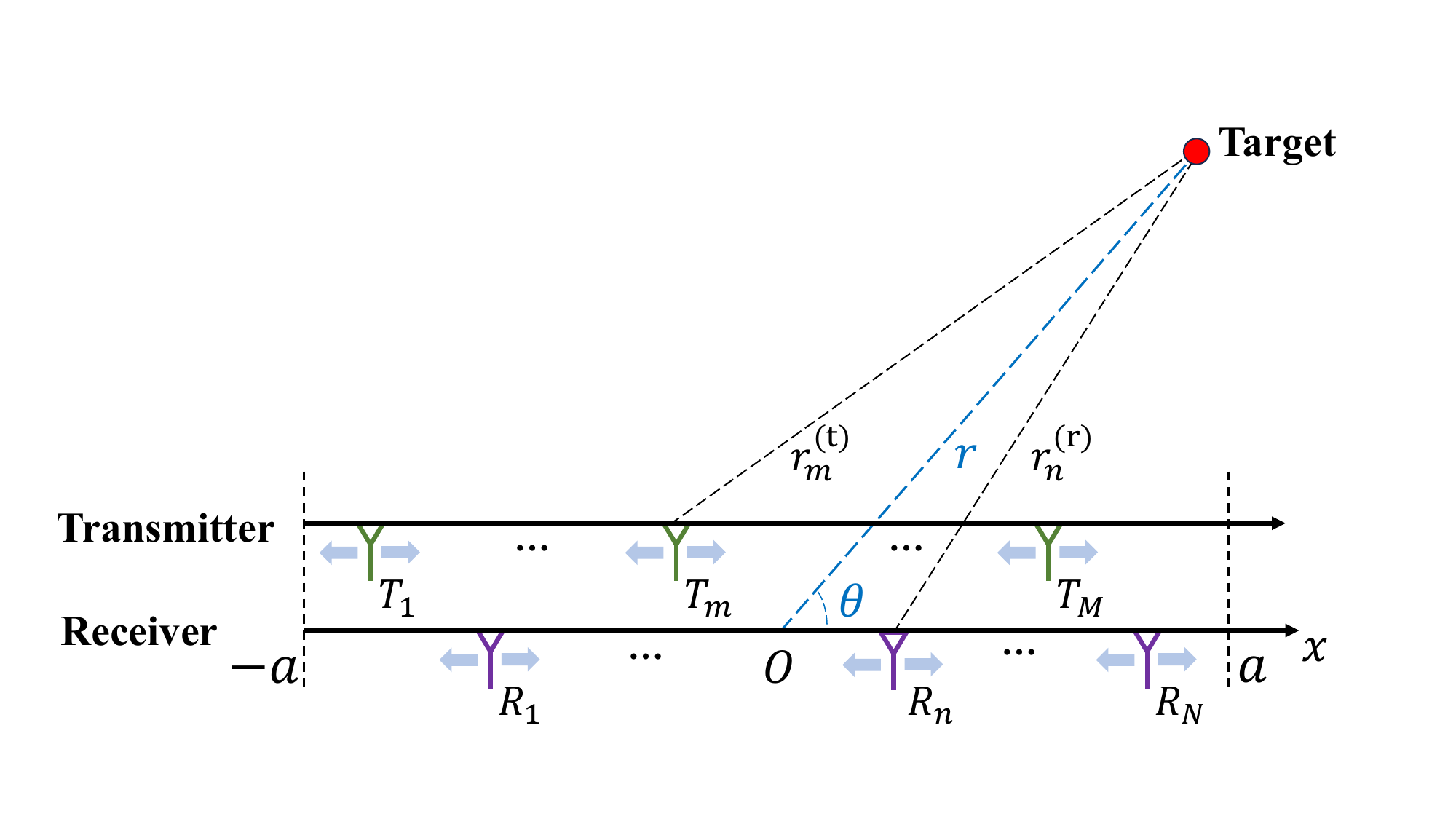}
		\label{fig:nearfield_scenario_mono}
		\vspace{-5mm}
		\caption{System model of the considered monostatic near-field sensing system.}
		\label{fig:nearfield_scenario}
	\end{figure}
	At snapshot $t \in \{1,\ldots,T\}$, the transmit array emits a deterministic probing signal vector $\mathbf{s}_t \in \mathbb{C}^{M\times 1}$. Over $T$ snapshots, the average sum transmit power across all transmit
	antennas is constrained as
	$	\frac{1}{T}\sum_{t=1}^{T}\|\mathbf{s}_t\|^2 \le P,	$
	where $P$ denotes the average sum-power budget of the transmit array.
	By stacking all $T$ snapshots, we obtain the transmit waveform matrix $\mathbf{S} = [\mathbf{s}_1,\mathbf{s}_2,\ldots,\mathbf{s}_T] \in \mathbb{C}^{M\times T}$.
	The waveform sequences assigned to different transmit antennas are assumed to be deterministic and mutually orthogonal across the $T$ snapshots \cite{NionTSP2010}, i.e.,
	\begin{equation}
		\frac{1}{T}\sum_{t=1}^{T}s_{m,t}s_{\ell,t}^{*}
		=
		\begin{cases}
			P\rho_m, & m=\ell,\\
			0, & m\neq \ell,
		\end{cases}
		\label{eq:waveform_orthogonality}
	\end{equation}
	where $s_{m,t}$ denotes the $(m,t)$-th entry of $\mathbf{S}$, and $\rho_m \ge  0$ is the fraction of transmit power allocated to the $m$-th transmit antenna, with $\sum_{m=1}^M \rho_m=1$. Consequently, the transmit covariance matrix is a diagonal matrix given by $
		\mathbf{R}_s
		= \frac{1}{T}\mathbf{S}\mathbf{S}^{\mathrm H}
		= P\,\operatorname{diag}(\bm \rho),
	$
	where $\bm{\rho}=[\rho_1,\ldots,\rho_M]^{\mathrm T} \in \mathbb{R}_{+}^M$ is the power-allocation vector. 
	\begin{remark}
		While orthogonal transmit waveforms are commonly adopted in wireless sensing
		systems \cite{BekkermanTSP, GodrichTSP0711}, the CRB derivation framework in this paper can also be extended to
		non-orthogonal waveforms. However, the resulting cross-correlation among non-orthogonal waveforms would yield a highly complicated CRB expression, making it challenging to obtain a closed-form solution for the array placement problem.
	\end{remark}
	
	We assume that the transmit and receive antennas are positioned along the $x$-axis within the aperture of length $D=2a$. Let $T_m =(x_m^{(\mathrm{t})},0) $ and $R_n = (x^{(\mathrm{r})}_n, 0)$ denote the locations of the $m$-th transmit antenna and the $n$-th receive antenna, respectively, where $x^{(\mathrm{t})}_m\in[-a,a]$ for $m=1,\ldots,M$ and $x^{(\mathrm{r})}_n\in[-a,a]$ for $n=1,\ldots,N$. The antenna locations are collected into vectors $\mathbf{x}_{\mathrm t}=[x^{(\mathrm{t})}_1,\ldots,x^{(\mathrm{t})}_M]^\mathrm{T}$ and $\mathbf{x}_{\mathrm r}=[x^{(\mathrm{r})}_1,\ldots,x^{(\mathrm{r})}_N]^\mathrm{T}$. 
	Without loss of generality, the antennas are indexed in ascending order of their locations, i.e., $-a \le x^{(\mathrm{t})}_1 < x^{(\mathrm{t})}_2 \dots < x^{(\mathrm{t})}_M \le a$ and $-a \le x^{(\mathrm{r})}_1 < x^{(\mathrm{r})}_2 \dots < x^{(\mathrm{r})}_N \le a$. 
	To mitigate the mutual coupling between adjacent MAs, a minimum inter-element spacing $d$ is imposed between any two adjacent MAs.
	The position of the target can be characterized either by its Cartesian coordinates $\mathbf{p}=[p_1,p_2]^{\mathrm{T}}$, or equivalently, by the coordinates $\bm{\eta}=[u,r]^{\mathrm{T}}$. Here, $u=\cos\theta$ and $r = \sqrt{p_1^2+p_2^2}$ represents the range from the array origin to the target, with $\theta\in[0,\pi]$ being the angle between the target-to-array-center line and the positive $x$-axis.
	
	In this work, we consider the target to be located within the radiating near-field region, denoted by $\Gamma$, whose lower and upper
	boundaries are characterized by the Fresnel distance $d_{\mathrm{F}}$ and the Rayleigh distance $d_{\mathrm{R}}$, respectively. For a given angle $\theta$, the Fresnel and Rayleigh distances are formulated as $d_{\mathrm{F}} = 0.62\sqrt{\frac{D^3\sin^3\theta}{\lambda}}$ and $d_{\mathrm{R}} = \frac{2D^2\sin^2\theta}{\lambda}$, respectively \cite{Balanis, liuOJCS2023,LuTWC2022}.
	Within $\Gamma$, the uniform spherical wave (USW) channel model is valid, which assumes uniform path loss amplitudes across the array elements while accounting for the phase variations induced by spherical wavefronts \cite{liuOJCS2023}. 
	
	Let $r_m^{(\mathrm{t})}(\bm{\eta})$ and $r_n^{(\mathrm{r})}(\bm{\eta})$ denote the distances from the target to the $m$-th transmit antenna and the $n$-th receive antenna, respectively, i.e.,
	$r_m^{(\mathrm{t})}(\bm{\eta}) = \sqrt{r^2 - 2rux_m^{(\mathrm{t})} + \bigl(x_m^{(\mathrm{t})}\bigr)^2}$ and $
			r_n^{(\mathrm{r})}(\bm{\eta}) = \sqrt{r^2 - 2rux_n^{(\mathrm{r})} + \bigl(x_n^{(\mathrm{r})}\bigr)^2}$.
	We assume that the sensing channel is dominated by a line-of-sight (LoS) link \cite{Song2026arxiv}. In particular, let $\mathbf{a}_{\mathrm t}(\mathbf{x}_{\mathrm t},\bm{\eta})$ and $\mathbf{b}_{\mathrm r}(\mathbf{x}_{\mathrm r},\bm{\eta})$ denote the transmit and receive steering vectors respectively, i.e.,
	\begin{equation}
		\begin{aligned}
			\mathbf{a}_{\mathrm t}(\mathbf{x}_{\mathrm t},\bm{\eta})
			&=\big[e^{-\jmath \frac{2\pi}{\lambda} r^{(\mathrm{t})}_1(\bm{\eta})},\,
			e^{-\jmath \frac{2\pi}{\lambda} r^{(\mathrm{t})}_2(\bm{\eta})},\,\ldots,\,
			e^{-\jmath \frac{2\pi}{\lambda} r^{(\mathrm{t})}_M(\bm{\eta})}\big]^{\mathrm{T}}, \\
			\mathbf{b}_{\mathrm r}(\mathbf{x}_{\mathrm r},\bm{\eta})
			&=\big[e^{-\jmath \frac{2\pi}{\lambda} r^{(\mathrm{r})}_1(\bm{\eta})},\,
			e^{-\jmath \frac{2\pi}{\lambda} r^{(\mathrm{r})}_2(\bm{\eta})},\,\ldots,\,
			e^{-\jmath \frac{2\pi}{\lambda} r^{(\mathrm{r})}_N(\bm{\eta})}\big]^{\mathrm{T}},
			\label{eq:steering}
		\end{aligned}
	\end{equation}
	where $\lambda$ denotes the wavelength of the transmitted signal. Using the Fresnel approximation in the near-field \cite{liuOJCS2023} of $r_m^{(\mathrm{t})}(\bm{\eta})$ and $r_n^{(\mathrm{r})}$, after removing the common range-dependent phase term, the transmit and receive steering vectors can be written as
	\begin{equation}
		\begin{aligned} \label{eq:steering_fresnel}
			\mathbf a_{\mathrm t}(\mathbf x_{\mathrm t},\boldsymbol{\eta})
			&\approx
			\Big[
			e^{\jmath\frac{2\pi}{\lambda}\psi(x^{(\mathrm{t})}_1;\bm{\eta})},
			\ldots,
			e^{\jmath\frac{2\pi}{\lambda}\psi(x^{(\mathrm{t})}_M;\bm{\eta})}
			\Big]^{\mathrm T},
			\\
			\mathbf b_{\mathrm r}(\mathbf x_{\mathrm r},\boldsymbol{\eta})
			&\approx
			\Big[
			e^{\jmath\frac{2\pi}{\lambda}\psi(x^{(\mathrm{r})}_1;\bm{\eta})},
			\ldots,
			e^{\jmath\frac{2\pi}{\lambda}\psi(x^{(\mathrm{r})}_N;\bm{\eta})}
			\Big]^{\mathrm T}.
		\end{aligned}
	\end{equation}
	where $	\psi(x;\bm{\eta})=ux-\frac{1-u^2}{2r}x^2$.
	The received signal at the receive array collected over $T$ snapshots can be expressed as \cite{LiTSP2008}:
	\begin{equation}
		\mathbf{y}_t = \alpha\, \beta(r)\,\mathbf{b}_{\mathrm r}(\mathbf x_{\mathrm r}, \bm{\eta}) \mathbf{a}_{\mathrm t}^{\mathrm{T}}(\mathbf x_{\mathrm t}, \bm{\eta}) \mathbf{s}_t + \mathbf{n}_t, \quad t=1,\ldots,T,
	\end{equation}
	where $\alpha\in\mathbb C$ denotes the unknown complex target-reflection coefficient associated with the target's radar cross-section (RCS), $\beta(r)$ denotes the range-dependent round-trip channel gain that is positive and non-increasing with $r$ \cite{liuOJCS2023}, and $\mathbf{n}_t\sim\mathcal{CN}(\bm{0},\sigma^2\mathbf{I}_N)$ denotes the additive white Gaussian noise (AWGN). Then, the received data matrix over $T$ snapshots can be written as
	\begin{equation}
		\mathbf Y
		=
		\alpha\, \beta(r)\,
		\mathbf b_{\mathrm r}(\mathbf{x}_{\mathrm r},\boldsymbol{\eta})\,
		\mathbf a_{\mathrm t}^{\mathrm T}(\mathbf{x}_{\mathrm t},\boldsymbol{\eta})\,
		\mathbf S
		+
		\mathbf N_0,
		\label{eq:raw_obs_matrix}
	\end{equation}
	where $\mathbf N_0	=[\mathbf n_1, \ldots, \mathbf n_T] \in \mathbb C^{N\times T}$.
	Vectorizing \eqref{eq:raw_obs_matrix} yields the vectorized observation model $\tilde{\mathbf y} \in \mathbb{C}^{NT \times 1}$ for subsequent parameter estimation:
	\begin{equation}
		\tilde{\mathbf y}
		=
		\operatorname{vec}(\mathbf Y)
		=
		\alpha\, \beta(r)\,
		\mathbf g
		+
		\tilde{\mathbf n},
		\label{eq:vectorized_statistic}
	\end{equation}
	where
	$
	\tilde{\mathbf n}
	\sim\mathcal{CN}(\mathbf 0,\sigma^2\mathbf I_{NT})
	$, 
	and
	$
			\mathbf g
			=
			\Big(
			\mathbf S^{\mathrm T}\mathbf a_{\mathrm t}(\mathbf{x}_{\mathrm t},\boldsymbol{\eta})
			\Big)
			\otimes
			\mathbf b_{\mathrm r}(\mathbf{x}_{\mathrm r},\boldsymbol{\eta}).
	$
	In \eqref{eq:vectorized_statistic}, both the target position
	$\bm{\eta}$ and the target-reflection
	coefficient $\alpha$ are unknown, with the former being our estimation objective. To evaluate the fundamental limits of the target position estimation performance, let $\mathrm{CRB}_{\bm{\eta}} \in \mathbb R^{2 \times 2}$ denote the CRB matrix for estimating $\bm{\eta}$, which will be derived in Section~\ref{sec: monostatic}.
	
	\section{Performance-Bound Derivation and Problem Formulation}
	\label{sec: monostatic}
	In this section, we first derive the CRB for target angle and range estimation to characterize the corresponding SPEB, and formulate the worst-case SPEB minimization problem for joint transmit/receive antenna placements and transmit power allocation.
	
	\subsection{Cramér-Rao Bound Derivation} \label{sec: monostatic-CRB}
	Based on the model in \eqref{eq:vectorized_statistic}, 
	define
	$\mathbf{w}(\mathbf z) = \alpha\, \beta(r)\,
	\mathbf g$ and $\mathbf z = [\bm{\eta}^{\mathrm T}, \bm{\alpha}^{\mathrm T}]^{\mathrm{T}}$ as the vector of unknown parameters, where $\bm{\alpha} = [\alpha_r,  \alpha_i]^{\mathrm{T}}$, with $\alpha_r$ and $\alpha_i$ denoting the real and imaginary parts of $\alpha$, respectively. 
	According to the derivation in \cite{kay1993fssp_estimation}, the Fisher information matrix (FIM) w.r.t. $\mathbf z$ is given by 
	\begin{equation}
		\begin{aligned}
			\mathbf F
			&=
			\frac{2}{\sigma^2}\,
			\Re\!\left\{
			\left(\frac{\partial \mathbf w}{\partial \mathbf z}\right)^{\mathrm H}
			\left(\frac{\partial \mathbf w}{\partial \mathbf z}\right)
			\right\} \in \mathbb R^{4 \times 4}.
			\label{eq:FIM_block_v}
		\end{aligned}
	\end{equation}
	Then, the CRB matrix for estimating $\bm\eta$ is given by
	the upper-left $2\times2$ principal submatrix of $\mathbf F^{-1}$ \cite{kay1993fssp_estimation, BoyerTSP}, i.e.,
	\begin{equation}
		\mathrm{CRB}_{\bm{\eta}}
		=
		\left[\mathbf F^{-1}\right]_{1:2,1:2}
		\in \mathbb{R}^{2\times2}.
		\label{eq:CRB_eta}
	\end{equation}
	Its diagonal entries provide lower bounds on the variances of any unbiased
	estimators of $u$ and $r$ \cite{BoyerTSP}, denoted by
	$\mathrm{CRB}_{u}$ and $\mathrm{CRB}_{r}$, respectively. To facilitate the CRB derivation, we define $m_k=\frac{1}{N}\sum_{n=1}^N (x_n^{(\mathrm{r})})^k
	$ and $s_k=\sum_{m=1}^M \rho_m (x_m^{(\mathrm{t})})^k$, respectively.
	\begin{lemma}
		The CRBs for estimating
		$u$ and $r$ are respectively given by
		\begin{equation}
			\begin{aligned}
				\mathrm{CRB}_{u}
				&=
				\kappa(r)
				\frac{M_{2}^{(\mathrm{r})}+M_{2}^{(\mathrm{t})}}
				{\left(M_{1}^{(\mathrm{r})}+M_{1}^{(\mathrm{t})}\right)
					\left(M_{2}^{(\mathrm{r})}+M_{2}^{(\mathrm{t})}\right)
					-\left(M_{3}^{(\mathrm{r})}+M_{3}^{(\mathrm{t})}\right)^{2}},
				\\
				\mathrm{CRB}_{r}
				&=
				\kappa(r)
				\frac{M_{1}^{(\mathrm{r})}+M_{1}^{(\mathrm{t})}}
				{\left(M_{1}^{(\mathrm{r})}+M_{1}^{(\mathrm{t})}\right)
					\left(M_{2}^{(\mathrm{r})}+M_{2}^{(\mathrm{t})}\right)
					-\left(M_{3}^{(\mathrm{r})}+M_{3}^{(\mathrm{t})}\right)^{2}}.
			\end{aligned}
			\label{eq.FIM-monostatic}
		\end{equation}
		where $\kappa(r)
		=\frac{\sigma^2\lambda^2}{8\pi^2TPN|\alpha|^2\beta^2(r)}$, and 
		\begin{align}
			M_1^{(\mathrm{r})}
			&=
			m_2+\frac{2u}{r}m_3+\frac{u^2}{r^2}m_4
			-
			\left(m_1+\frac{u}{r}m_2\right)^2,
			\label{eq:Mr1}
			\\
			M_2^{(\mathrm{r})}
			&=
			\frac{(1-u^2)^2}{4r^4}\left(m_4-m_2^2\right),
			\label{eq:Mr2}
			\\
			M_3^{(\mathrm{r})}
			&=
			\frac{1-u^2}{2r^2}
			\left(
			m_3+\frac{u}{r}m_4-m_1m_2-\frac{u}{r}m_2^2
			\right),
			\label{eq:Mr3}
			\\
			M_1^{(\mathrm{t})}
			&=
			s_2+\frac{2u}{r}s_3+\frac{u^2}{r^2}s_4
			-
			\left(s_1+\frac{u}{r}s_2\right)^2,
			\label{eq:Mt1}
			\\
			M_2^{(\mathrm{t})}
			&=
			\frac{(1-u^2)^2}{4r^4}\left(s_4-s_2^2\right),
			\label{eq:Mt2}
			\\
			M_3^{(\mathrm{t})}
			&=
			\frac{1-u^2}{2r^2}
			\left(
			s_3+\frac{u}{r}s_4-s_1s_2-\frac{u}{r}s_2^2
			\right).
			\label{eq:Mt3}
		\end{align}
	\end{lemma}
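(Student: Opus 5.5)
The plan is to compute the $4\times4$ FIM in \eqref{eq:FIM_block_v} explicitly and then eliminate the nuisance parameter $\bm\alpha$ through a Schur complement. Write $k=2\pi/\lambda$. Under the Fresnel model \eqref{eq:steering_fresnel}, the entry of $\mathbf g$ indexed by $(t,n)$ is
$$\Big(\sum_{m=1}^M s_{m,t}e^{\jmath k\psi(x_m^{(\mathrm t)};\bm\eta)}\Big)e^{\jmath k\psi(x_n^{(\mathrm r)};\bm\eta)}.$$
Define the real gradient $\mathbf f(x)=\big[\partial_u\psi,\ \partial_r\psi\big]^{\mathrm T}=\big[x+\tfrac{u}{r}x^2,\ \tfrac{1-u^2}{2r^2}x^2\big]^{\mathrm T}$. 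Then differentiating $\mathbf w$ with respect to $u$ or $r$ multiplies each $(m,t,n)$ contribution by $\jmath k\,[\mathbf f(x_m^{(\mathrm t)})+\mathbf f(x_n^{(\mathrm r)})]_i$.

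\textbf{The $\beta'(r)$ term.} Differentiating $\mathbf w$ in $r$ also produces $\alpha\beta'(r)\mathbf g$, which is a complex multiple of $\mathbf g$. The nuisance derivatives are $\partial\mathbf w/\partial\alpha_r=\beta\mathbf g$ and $\partial\mathbf w/\partial\alpha_i=\jmath\beta\mathbf g$, and real combinations of these span every complex multiple of $\mathbf g$. Hence this extra term lies in the nuisance span and drops out of the Schur complement $\mathbf J=\mathbf F_{\eta\eta}-\mathbf F_{\eta\alpha}\mathbf F_{\alpha\alpha}^{-1}\mathbf F_{\alpha\eta}$, whose inverse equals $\mathrm{CRB}_{\bm\eta}$ in \eqref{eq:CRB_eta}. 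So I will simply ignore $\beta'$.

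\textbf{Computing the blocks.} In every inner product I first sum over $t$. By the orthogonality \eqref{eq:waveform_orthogonality}, $\sum_t s_{m,t}^*s_{\ell,t}=TP\rho_m\delta_{m\ell}$, so all cross-antenna terms vanish. This gives
$$\mathbf F_{\eta\eta}=c\sum_{m,n}\rho_m(\mathbf f_m+\mathbf f_n)(\mathbf f_m+\mathbf f_n)^{\mathrm T},\qquad c=\frac{2|\alpha|^2\beta^2k^2TP}{\sigma^2}.$$
Because $\Re\{(\jmath z)^*\}$ structure makes the $\bm\alpha$ blocks simple, one finds $\mathbf F_{\alpha\alpha}=\tfrac{2\beta^2TPN}{\sigma^2}\mathbf I_2$. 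The off-diagonal block $\mathbf F_{\eta\alpha}$ is proportional to $\sum_{m,n}\rho_m(\mathbf f_m+\mathbf f_n)$ times the row vector $[-\alpha_i,\ \alpha_r]$ (up to sign). Forming the Schur complement then yields
$$\mathbf J=c\Big[\sum_{m,n}\rho_m(\mathbf f_m+\mathbf f_n)(\mathbf f_m+\mathbf f_n)^{\mathrm T}-\tfrac1N\,\bar{\mathbf v}\bar{\mathbf v}^{\mathrm T}\Big],\qquad \bar{\mathbf v}=\sum_{m,n}\rho_m(\mathbf f_m+\mathbf f_n).$$

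\textbf{Reduction to covariances.} Introduce the measures $\rho$ on the transmit positions and uniform $1/N$ weights on the receive positions. Since these two sums are independent, the cross terms cancel and
$$\mathbf J=cN\big[\mathrm{Cov}_\rho(\mathbf f(x^{(\mathrm t)}))+\mathrm{Cov}_{\mathrm{unif}}(\mathbf f(x^{(\mathrm r)}))\big].$$
Expanding these covariances in the moments $s_k$ and $m_k$ gives exactly \eqref{eq:Mr1}--\eqref{eq:Mt3}:
\begin{itemize}[label={}]
\item $M_1$ is the variance of $x+\tfrac ur x^2$;
\item $M_2$ is the variance of $\tfrac{1-u^2}{2r^2}x^2$;
\item $M_3$ is their covariance.
\end{itemize}
Inverting the $2\times2$ matrix $cN\begin{bmatrix}M_1&M_3\\M_3&M_2\end{bmatrix}$, with $M_i=M_i^{(\mathrm r)}+M_i^{(\mathrm t)}$, gives the stated diagonal entries. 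The prefactor is $1/(cN)=\sigma^2\lambda^2/(8\pi^2TPN|\alpha|^2\beta^2)=\kappa(r)$.

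The main obstacle is the careful bookkeeping of the $\bm\alpha$ cross blocks. This means getting the real-part structure and the factor $\tfrac1N$ right, so that the Schur complement produces a \emph{sum} of the two separate covariances rather than a covariance of the sum with spurious cross terms. I would verify this step by noting that $\mathbf f_m+\mathbf f_n$ is a sum of independent variables under the product measure $\rho\otimes\mathrm{unif}$, so its variance splits additively.
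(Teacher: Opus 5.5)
Your proposal is correct and follows essentially the same route as the paper: the matrix $\mathbf Q'$ in Appendix~\ref{appendix:FIM_MN_monostatic} is exactly your Schur complement, i.e., $\tilde{\mathbf g}_u,\tilde{\mathbf g}_r$ projected off the span of $\tilde{\mathbf g}$, evaluated using the waveform orthogonality to eliminate cross-antenna terms. The differences are only in presentation. You dispose of the $\beta'(r)$ contribution by noting that it is a complex multiple of $\mathbf g$ and so lies in the real span of the nuisance columns, whereas the paper expands every inner product and observes the cancellation. Your product-measure covariance view also makes the additive transmit/receive split immediate, a viewpoint the paper itself adopts later in Appendix~\ref{prop.symmetry-mn}.
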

	\begin{IEEEproof}
		Please see Appendix~\ref{appendix:FIM_MN_monostatic}.
	\end{IEEEproof}
	
	\begin{remark}
		It is worth noting that the CRB expressions in \eqref{eq.FIM-monostatic} explicitly reveal their dependence on both the transmit and receive array geometries.
		However, $\mathrm{CRB}_{u}$ is dimensionless while $\mathrm{CRB}_{r}$ is measured in $\mathrm{m}^2$. It is inappropriate to use a weighted sum of CRBs for estimating angle and range as a sensing-performance metric. These observations motivate the adoption of the SPEB \cite{Wintit2010}, which measures the position estimation error directly.
	\end{remark}
	\vspace{-6mm}
	\subsection{Squared Position Error Bound Characterization}
    Mathematically, the SPEB is defined as the trace of the Cartesian CRB matrix and denoted by $\mathrm{SPEB}(\mathbf{x}_{\mathrm t},\mathbf{x}_{\mathrm r},\bm{\rho},\mathbf{p})$, i.e., $\mathrm{SPEB}(\mathbf x_{\mathrm t}, \mathbf x_{\mathrm r},\bm{\rho},\mathbf{p}) = \tr(\mathrm{CRB}_{\mathbf{p}})$ \cite{Wintit2010}, where $\mathrm{CRB}_{\mathbf{p}}\in \mathbb R^{2 \times 2}$ denotes the CRB matrix for
	estimating $\mathbf{p}$. Its relationship with the polar CRB is given in the following lemma.
	
	\begin{lemma}\label{lemma:CRB_Cartesian}
		The target's SPEB is given by
		\begin{equation}
			\begin{aligned}
				\label{eq. CRB-transform}
				\mathrm{SPEB}(\mathbf x_{\mathrm t}, \mathbf x_{\mathrm r},\bm{\rho},\mathbf{p}) =
				\frac{r^2}{1-u^2}\,\mathrm{CRB}_u
				+
				\mathrm{CRB}_r.
			\end{aligned}
		\end{equation}
	\end{lemma}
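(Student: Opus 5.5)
The proof is a standard reparameterization of the CRB followed by a trace computation. The only structural fact needed is that the polar-to-Cartesian Jacobian has orthogonal columns.

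\emph{Step 1: the map from $\bm\eta$ to $\mathbf p$.} With $u=\cos\theta$ and $\theta\in[0,\pi]$, we have
\begin{equation*}
p_1 = ru, \qquad p_2 = r\sqrt{1-u^2}.
\end{equation*}
For any target in $\Gamma$ we have $\sin\theta>0$, since otherwise $d_{\mathrm R}=0$ and the region is empty. Hence $|u|<1$ and the map $\bm\eta\mapsto\mathbf p$ is a smooth bijection near the target. We invoke the standard transformation rule for the CRB under a differentiable reparameterization \cite{kay1993fssp_estimation}:
\begin{equation*}
\mathrm{CRB}_{\mathbf p}=\mathbf J\,\mathrm{CRB}_{\bm\eta}\,\mathbf J^{\mathrm T}, \qquad \mathbf J=\frac{\partial \mathbf p}{\partial \bm\eta^{\mathrm T}}.
\end{equation*}
Here $\mathrm{CRB}_{\bm\eta}$ is the upper-left block in \eqref{eq:CRB_eta}. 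This block already accounts for the nuisance parameter $\bm\alpha$ through the Schur complement, so the rule applies directly to it.

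\emph{Step 2: the Jacobian.} Differentiating gives
\begin{equation*}
\mathbf J=\begin{bmatrix} r & u\\ -\frac{ru}{\sqrt{1-u^2}} & \sqrt{1-u^2}\end{bmatrix}.
\end{equation*}
By cyclicity of the trace,
\begin{equation*}
\mathrm{SPEB}=\tr(\mathbf J\,\mathrm{CRB}_{\bm\eta}\mathbf J^{\mathrm T})=\tr(\mathbf J^{\mathrm T}\mathbf J\,\mathrm{CRB}_{\bm\eta}).
\end{equation*}
We compute the entries of $\mathbf J^{\mathrm T}\mathbf J$:
\begin{itemize}
\item the $(1,1)$ entry is $r^2+\frac{r^2u^2}{1-u^2}=\frac{r^2}{1-u^2}$;
\item the $(2,2)$ entry is $u^2+(1-u^2)=1$;
\item the off-diagonal entry is $ru-ru=0$.
\end{itemize}
Thus $\mathbf J^{\mathrm T}\mathbf J=\operatorname{diag}\!\big(\tfrac{r^2}{1-u^2},1\big)$.

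\emph{Step 3: conclusion.} Because $\mathbf J^{\mathrm T}\mathbf J$ is diagonal, the off-diagonal entry of $\mathrm{CRB}_{\bm\eta}$ drops out of the trace. Therefore
\begin{equation*}
\mathrm{SPEB}=\frac{r^2}{1-u^2}\,\mathrm{CRB}_u+\mathrm{CRB}_r,
\end{equation*}
with $\mathrm{CRB}_u$ and $\mathrm{CRB}_r$ given by Lemma~1, which is exactly \eqref{eq. CRB-transform}.

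\emph{Main obstacle.} The calculation itself is routine. The step needing the most care is justifying the reparameterization rule: the map must be differentiable with an invertible Jacobian. Here $\det\mathbf J=\frac{r}{\sqrt{1-u^2}}\neq 0$ for $r>0$ and $|u|<1$. The cancellation of the cross term is what makes the final expression independent of the $(u,r)$ cross-covariance.
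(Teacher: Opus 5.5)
Your proof is correct and follows the paper's own proof essentially step for step. Both use the same Jacobian of $(u,r)\mapsto(ru,\,r\sqrt{1-u^2})$ and the same transformation $\mathrm{CRB}_{\mathbf p}=\mathbf C\,\mathrm{CRB}_{\bm\eta}\,\mathbf C^{\mathrm T}$, and both apply trace cyclicity with $\mathbf C^{\mathrm T}\mathbf C=\operatorname{diag}\bigl(\tfrac{r^2}{1-u^2},1\bigr)$ to eliminate the cross term. One cosmetic point: you call the Jacobian $\mathbf J$, but the paper uses $\mathbf J$ for the Fisher information of $\bm\eta$ in its CRB derivation, so use $\mathbf C(\bm\eta)$ as the paper does to avoid a clash.
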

	
	\begin{IEEEproof}
		Please see Appendix~\ref{app.Jacobian-SPEB}.
	\end{IEEEproof}
	In this paper, we aim to ensure robust performance against the worst-case target location within the prescribed near-field region $\Gamma$. This is achieved by minimizing the maximum $\mathrm{SPEB}(\mathbf{x}_{\mathrm t},\mathbf{x}_{\mathrm r}, \bm{\rho}, \mathbf{p})$ over the near-field region, i.e.,
	\begin{subequations}
		\begin{alignat}{3}
			\text{(P1)}: & \quad & \operatorname*{minimize}_{\mathbf x_{\mathrm t}, \mathbf x_{\mathrm r}, \bm{\rho}\in \mathbb{R}_{+}^M}& \quad  \max_{\mathbf{p} \in \Gamma} \quad \mathrm{SPEB}(\mathbf x_{\mathrm t}, \mathbf x_{\mathrm r},\bm{\rho},\mathbf{p}) \\
			& & & \hspace{-2cm} \mathrm{s.t.} \quad x_m^{(\mathrm{t})} \in [-a, a], \quad \forall m=1, \dots, M, \\
			& & & \hspace{-1.5cm} \quad x_n^{(\mathrm{r})} \in [-a, a], \quad \forall n=1, \dots, N, \label{eq. cons-receive-range}\\
			& & & \hspace{-1.5cm}\quad x_{m+1}^{(\mathrm{t})} - x_m^{(\mathrm{t})} \ge d, \quad \forall m=1,\dots,M-1, \label{eq. cons-inter-spacing1}\\ 
			& & & \hspace{-1.5cm}\quad x_{n+1}^{(\mathrm{r})} - x_n^{(\mathrm{r})} \ge d, \quad \forall n=1,\dots,N-1, \label{eq. cons-inter-spacing2}\\ 
			& & & \hspace{-1.5cm}\quad \sum_{m=1}^{M}\rho_m = 1.
		\end{alignat}
	\end{subequations}
	where constraint \eqref{eq. cons-inter-spacing1} and \eqref{eq. cons-inter-spacing2} ensure minimum inter-element spacing to avoid mutual coupling. 
	
	It is worth noting that problem~(P1) is a highly coupled non-convex min-max problem involving
	the transmit antenna locations, receive antenna locations, and transmit
	power allocation. The minimum inter-element spacing constraints further
	couple adjacent antenna positions, making the optimization problem challenging. Although this non-convex optimization problem can be solved by conventional iterative methods, it could incur substantial complexity for
	large-scale arrays and provide limited structural insight. These challenges
	motivate the subsequent structural analysis and distributional reformulation.
	\section{Problem Reformulation and Closed-Form Design}
	In this section, we reformulate the problem (P1) as a distribution-based design problem. Then, we establish the optimality of centro-symmetric distributions, identify the worst-case target location, and subsequently derive the closed-form three-point structure via moment-based methods. Finally, we present a practical antenna placement and power-allocation realization strategy.
	\subsection{Problem Reformulation and Analysis}
	In problem (P1), the minimum spacing constraints \eqref{eq. cons-inter-spacing1} and \eqref{eq. cons-inter-spacing2} couple adjacent antenna positions, making the joint optimization mathematically intractable. To facilitate our analysis and gain theoretical insights, we temporarily relax constraints \eqref{eq. cons-inter-spacing1} and \eqref{eq. cons-inter-spacing2}. 
	Moreover, from \eqref{eq:Mr1}-\eqref{eq:Mr3}, we observe that $M_1^{(\mathrm{r})}$, $M_2^{(\mathrm{r})}$ and $M_3^{(\mathrm{r})}$ are closely related to the term $m_k = \frac{1}{N} \sum_{n=1}^N (x^{(\mathrm{r})}_n)^{k}$. This term can be naturally interpreted as the $k$-th sample moment of $N$ observations $\{(x^{(\mathrm{r})}_n)\}_{n=1}^N$, which are drawn from a certain distribution of a random variable $X_{\mathrm r}$. Similarly, $M_1^{(\mathrm{t})}$, $M_2^{(\mathrm{t})}$ and $M_3^{(\mathrm{t})}$ admit analogous
	representations in terms of the power-weighted transmit antenna
	location moments $s_k = \sum_{m=1}^M \rho_m(x^{(\mathrm{t})}_m)^{k}$, which can be viewed as the $k$-th moment of a random variable $X_{\mathrm t}$.
	Specifically, we define a sufficiently fine uniform candidate-location grid
	$\mathcal S=\{\zeta_1,\ldots,\zeta_L\}\subseteq[-a,a]$, where
	$|\mathcal S|=L$ and $L$ is a sufficiently large odd integer. Let $X_{\mathrm t}$ and $X_{\mathrm r}$ be discrete random variables supported on $\mathcal S$ that model the distribution of power-weighted transmit antennas locations and receiver antenna locations, respectively. The entries
	of $\mathbf{x}_{\mathrm r}$ can then be viewed as $N$ realizations of $X_{\mathrm r}$.
	As $N \to \infty$, functions $m_k$ can be identified with the moments of random variable $X_{\mathrm r}$, and similarly functions $s_k$ can be identified with the moments of random variable $X_{\mathrm t}$ as $M \to \infty$, i.e., $m_k = \mathbb{E}(X_{\mathrm r}^k)$ and $s_k = \mathbb{E}(X_{\mathrm t}^k)$. 
	Therefore, we can reformulate the discrete antenna placement and power allocation problem as a distribution design problem for $X_{\mathrm t}$ and $X_{\mathrm r}$. To characterize the distributions of $X_{\mathrm t}$ and $X_{\mathrm r}$, we introduce their probability mass functions $w_{\mathrm t}(\cdot)$ and $w_{\mathrm r}(\cdot)$ on $\mathcal S$ by $w(\zeta)= \Pr(X=\zeta)$ for all $\zeta\in\mathcal S$.

	This transformation leads to the reformulated problem $(\text{P1}')$ which is explicitly expressed as
	\begin{subequations}
		\begin{alignat}{2}
			(\text{P1}'): \quad & \operatorname*{minimize}_{w_{\mathrm r}(\cdot),w_{\mathrm t}(\cdot)} \quad && \max_{\mathbf{p} \in \Gamma} \quad \mathrm{SPEB}(w_{\mathrm r}(\cdot),w_{\mathrm t}(\cdot);\mathbf{p}) \\
			& \hspace{0.1cm} \mathrm{s.t.} && \sum_{\zeta\in\mathcal S} w_{\mathrm t}(\zeta)=1, \sum_{\zeta\in\mathcal S} w_{\mathrm r}(\zeta)=1, \label{eq. cons-distri-1}\\ 
			& && w_{\mathrm t}(\zeta), w_{\mathrm r}(\zeta)\ge 0,\quad \forall \zeta\in\mathcal S. \label{eq. cons-distri-2}
		\end{alignat}
	\end{subequations}
	Here, $\mathrm{SPEB}(w_{\mathrm r}(\cdot),w_{\mathrm t}(\cdot);\mathbf{p})$ denotes the SPEB evaluated under the distribution $w_{\mathrm r}(\cdot)$ and $w_{\mathrm t}(\cdot)$ for a target at $\mathbf p$.
	
	To solve $(\text{P1}')$, we adopt a two-stage approach to decouple the min-max structure. We first address the inner maximization stage by defining the worst-case design metric $F(w_{\mathrm r}(\cdot),w_{\mathrm t}(\cdot))$ as
	\begin{equation}
		\label{eq:def-worstcase-distribution}
		F(w_{\mathrm r}(\cdot),w_{\mathrm t}(\cdot))
		\;=\;
		\max_{\mathbf{p}\in\Gamma}
		\;\mathrm{SPEB}(w_{\mathrm r}(\cdot),w_{\mathrm t}(\cdot);\mathbf{p}).
	\end{equation}
	Then, the relaxed distributional problem is formulated as
	\begin{subequations}
		\begin{alignat}{2}
			(\text{P2}):\quad & \min_{w_{\mathrm r}(\cdot),\,w_{\mathrm t}(\cdot)} \quad && F(w_{\mathrm r}(\cdot),w_{\mathrm t}(\cdot)) \\
			& \hspace{0.6cm} \mathrm{s.t.} \quad && \eqref{eq. cons-distri-1}~\text{and}~\eqref{eq. cons-distri-2}. \nonumber
		\end{alignat}
	\end{subequations}
	\begin{proposition}
		\label{prop:active_symmetry}
		For any	distribution $w(\cdot)$ supported on $\mathcal S \subseteq [-a,a]$, define its symmetrized
		counterpart as
		\begin{equation}
			w_{\mathrm{sym}}(\zeta)
			=
			\frac{w(\zeta)+w(-\zeta)}{2},
			\qquad \forall\,\zeta\in\mathcal S.
		\end{equation}
		Then, for any pair $(w_{\mathrm r}(\cdot),w_{\mathrm t}(\cdot))$, the corresponding
		symmetrized distributions satisfy
		\begin{equation}
			F\!\left(
			w_{\mathrm r,\mathrm{sym}}(\cdot),
			w_{\mathrm t,\mathrm{sym}}(\cdot)
			\right)
			\leq
			F\!\left(w_{\mathrm r}(\cdot),w_{\mathrm t}(\cdot)\right).
		\end{equation}
		Hence, an optimal solution exists within the class of pairs of
		centro-symmetric distributions.
	\end{proposition}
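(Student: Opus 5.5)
Our plan is to combine a reflection invariance of the SPEB with a concavity property of the Fisher information in the distributions. First, we write the SPEB through the $2\times 2$ (nuisance-eliminated) FIM
\begin{equation*}
\mathbf J(w_{\mathrm r},w_{\mathrm t};\bm\eta)=\frac{1}{\kappa(r)}\begin{bmatrix} M_1^{(\mathrm r)}+M_1^{(\mathrm t)} & M_3^{(\mathrm r)}+M_3^{(\mathrm t)}\\ M_3^{(\mathrm r)}+M_3^{(\mathrm t)} & M_2^{(\mathrm r)}+M_2^{(\mathrm t)}\end{bmatrix},
\end{equation*}
so that $\mathrm{CRB}_{\bm\eta}=\mathbf J^{-1}$ by the preceding lemma. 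By Lemma~\ref{lemma:CRB_Cartesian}, $\mathrm{SPEB}=\tr(\mathbf W\mathbf J^{-1})$ with $\mathbf W=\operatorname{diag}(r^2/(1-u^2),1)\succ 0$.

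The key observation is that each block $M_i^{(\cdot)}$ is the covariance matrix of the gradient features $\big(x+\tfrac{u}{r}x^2,\ \tfrac{1-u^2}{2r^2}x^2\big)$ under the distribution $w$. Indeed, $M_1$ is the variance of the first feature, $M_2$ the variance of the second, and $M_3$ their covariance, as can be read off \eqref{eq:Mr1}--\eqref{eq:Mt3}. A covariance matrix is a concave function of the underlying probability measure, in the Loewner order. Concretely, for a mixture $\tfrac12(w+w')$, the covariance equals the average of the two covariances plus a rank-one PSD term built from the difference of the means. Hence
\begin{equation*}
\mathbf J(w_{\mathrm r,\mathrm{sym}},w_{\mathrm t,\mathrm{sym}};\bm\eta)\succeq \tfrac12\mathbf J(w_{\mathrm r},w_{\mathrm t};\bm\eta)+\tfrac12\mathbf J(\tilde w_{\mathrm r},\tilde w_{\mathrm t};\bm\eta),
\end{equation*}
where $\tilde w(\zeta)=w(-\zeta)$. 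This uses the fact that $\mathcal S$ is symmetric, being a uniform grid with odd $L$.

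Next, we use reflection invariance. Replacing every $x$ by $-x$ is equivalent to replacing $u$ by $-u$, since $\psi(-x;u,r)=\psi(x;-u,r)$ with the sign of the linear term absorbed. Therefore $\mathbf J(\tilde w_{\mathrm r},\tilde w_{\mathrm t};u,r)=\mathbf D\,\mathbf J(w_{\mathrm r},w_{\mathrm t};-u,r)\,\mathbf D$ for $\mathbf D=\operatorname{diag}(-1,1)$. Since $\mathbf W$ is diagonal, this gives $\mathrm{SPEB}(\tilde w;u,r)=\mathrm{SPEB}(w;-u,r)$. The region $\Gamma$ is symmetric under $u\mapsto-u$, because $d_{\mathrm F}$ and $d_{\mathrm R}$ depend only on $\sin\theta$. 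Hence $F(\tilde w_{\mathrm r},\tilde w_{\mathrm t})=F(w_{\mathrm r},w_{\mathrm t})$.

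We then combine these with convexity. The map $\mathbf J\mapsto\tr(\mathbf W\mathbf J^{-1})$ is operator-decreasing and convex on $\mathbb S^2_{++}$. Applying both properties pointwise in $\mathbf p$ gives
\begin{equation*}
\mathrm{SPEB}(w_{\mathrm{sym}};\mathbf p)\le \tfrac12\mathrm{SPEB}(w;\mathbf p)+\tfrac12\mathrm{SPEB}(\tilde w;\mathbf p).
\end{equation*}
Taking the maximum over $\Gamma$ and using the reflection invariance yields $F(w_{\mathrm{sym}})\le \tfrac12F(w)+\tfrac12F(\tilde w)=F(w)$. Existence of an optimal solution follows because the feasible set is a compact simplex product. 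The function $F$ is lower semicontinuous as a supremum of continuous functions, where finite; points with singular $\mathbf J$ give $+\infty$ and can be handled as extended values. Symmetrizing any minimizer then gives a centro-symmetric minimizer.

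The main obstacle will be verifying the covariance interpretation exactly, i.e.\ that $M_1,M_2,M_3$ equal $\mathrm{Var}$ and $\mathrm{Cov}$ of the stated features for both the receive and the transmit sides, and checking the sign bookkeeping under reflection. This includes confirming that the off-diagonal entry $M_3$ flips sign under $x\mapsto -x$ together with $u\mapsto-u$, so that the conjugation by $\mathbf D$ is exact. We also need to handle degenerate distributions for which $\mathbf J$ is singular, by working with extended values.
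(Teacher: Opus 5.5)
Your proposal is correct and follows essentially the same route as the paper. The paper likewise writes the FIM as $\frac{1}{\kappa(r)}\mathbf H^{\mathrm T}\boldsymbol{\Sigma}\mathbf H$, where $\boldsymbol{\Sigma}$ is the summed covariance of $(X,X^2)$, and uses the mixture identity $\boldsymbol{\Sigma}_{\mathrm{sym}}=\tfrac12(\boldsymbol{\Sigma}+\check{\boldsymbol{\Sigma}})+\operatorname{diag}(\mathbb E^2[X],0)$, operator monotonicity and convexity of the inverse, and the reflection $\check{\boldsymbol{\Sigma}}=\mathbf D\boldsymbol{\Sigma}\mathbf D$ together with the $u\mapsto -u$ symmetry of $\Gamma$. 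The differences are minor: you work with $\mathbf J$ and $\tr(\mathbf W\mathbf J^{-1})$ directly instead of with $\boldsymbol{\Sigma}^{-1}$ and the functional $L_{u,r}$, and you add an explicit compactness and lower-semicontinuity argument for existence and a treatment of singular FIMs, both of which the paper leaves implicit.
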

	\begin{IEEEproof}
		Please see Appendix~\ref{prop.symmetry-mn}.
	\end{IEEEproof}
	Proposition~\ref{prop:active_symmetry} implies that an optimal solution follows the symmetric distribution\footnote{Note that the symmetry of $w_{\mathrm t}(\cdot)$ does not imply that the full transmit-antenna placement vector $\mathbf{x}_{\mathrm t}$ is centro-symmetric, since $w_{\mathrm t}(\cdot)$ is a power-weighted location distribution jointly determined by antenna positions $\mathbf{x}_{\mathrm t}$ and the power allocation vector $\bm{\rho}$.}. For such centro-symmetric distributions, the odd-moment term becomes zero, i.e., $m_1=m_3=s_1=s_3=0$. Consequently, for any centro-symmetric distributions $w_{\mathrm r,\mathrm{sym}}(\cdot)$ and $w_{\mathrm t,\mathrm{sym}}(\cdot)$, $\mathrm{SPEB}$ is reduced to
	\begin{equation}
		\begin{aligned} \label{eq:mono-SPEB-sym}
			\operatorname{SPEB}(w_{\mathrm r,\mathrm{sym}}(\cdot),w_{\mathrm t,\mathrm{sym}}(\cdot),\mathbf{p})
			=
			\kappa(r)\left(\frac{k_1}{\Xi_1}+\frac{k_2}{\Xi_2}\right).
		\end{aligned}
	\end{equation}
	where $\Xi_1=m_2+s_2$, $\Xi_2=(m_4-m_2^2)+(s_4-s_2^2)$, $k_1 = \frac{r^2(1+3u^2)}{(1-u^2)^2}$, and $
	k_2 = \frac{4r^4}{(1-u^2)^2}$. 
	Define $\mathbf{p}_{\mathrm{worst}}$ as a worst-case target location for centro-symmetric distributions,
	i.e.,
	\begin{equation}
		\mathbf{p}_{\mathrm{worst}}
		\in
		\arg\max_{\mathbf{p}\in\Gamma} \quad \mathrm{SPEB}(w_{\mathrm r,\mathrm{sym}}(\cdot),w_{\mathrm t,\mathrm{sym}}(\cdot), \mathbf{p}).
		\label{eq:worst_location}
	\end{equation}
	\begin{proposition}
		\label{prop:active_broadside}
		For any centro-symmetric distributions $w_{\mathrm r,\mathrm{sym}}(\cdot)$ and $w_{\mathrm t,\mathrm{sym}}(\cdot)$ supported on $\mathcal S \subseteq [-a,a]$, 
		the worst-case
		target location is independent of $(w_{\mathrm r,\mathrm{sym}}(\cdot),w_{\mathrm t,\mathrm{sym}}(\cdot))$, and is
		given by
		\begin{equation}
			\begin{aligned}
				\mathbf{p}_{\mathrm{worst}}
				= [\,0,\; d_{\mathrm{R},\max}\,]^{\mathrm{T}},
			\end{aligned}
		\end{equation}
		where $d_{\mathrm{R},\max}=\frac{2D^2}{\lambda}$ denotes the broadside Rayleigh distance.
	\end{proposition}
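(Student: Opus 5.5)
The plan is to exploit the separable structure in \eqref{eq:mono-SPEB-sym}. For centro-symmetric distributions, $\Xi_1=m_2+s_2$ and $\Xi_2=(m_4-m_2^2)+(s_4-s_2^2)$ depend only on $(w_{\mathrm r,\mathrm{sym}}(\cdot),w_{\mathrm t,\mathrm{sym}}(\cdot))$ and not on $\mathbf p$. All dependence on the target location therefore sits in the three scalar functions $\kappa(r)$, $k_1(u,r)$ and $k_2(u,r)$. I will maximize $\kappa(r)\bigl(k_1/\Xi_1+k_2/\Xi_2\bigr)$ over $\Gamma$ in two stages: first over $r$ for fixed $u$, then over $u$. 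At each stage I will try to show monotonicity that holds for every pair of positive constants $(\Xi_1,\Xi_2)$ in the admissible range, which is what makes the maximizer independent of the distributions.

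\emph{Step 1 (range).} Since $\beta(r)$ is positive and non-increasing, $\kappa(r)\propto 1/\beta^2(r)$ is non-decreasing in $r$. Both $k_1=\frac{r^2(1+3u^2)}{(1-u^2)^2}$ and $k_2=\frac{4r^4}{(1-u^2)^2}$ are strictly increasing in $r$. Hence, for each fixed $u$, the SPEB is maximized at the outer boundary of $\Gamma$, namely $r=d_{\mathrm R}(u)=c(1-u^2)$ with $c=d_{\mathrm R,\max}=2D^2/\lambda$. Substituting gives
\begin{equation*}
\mathrm{SPEB}\big|_{r=d_{\mathrm R}}=\kappa\bigl(c(1-t)\bigr)\left(\frac{c^2(1+3t)}{\Xi_1}+\frac{4c^4(1-t)^2}{\Xi_2}\right),\qquad t=u^2 .
\end{equation*}

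\emph{Step 2 (angle).} The factor $\kappa(c(1-t))$ is non-increasing in $t$, so it suffices to show that $g(t)=A(1+3t)+B(1-t)^2$, with $A=c^2/\Xi_1$ and $B=4c^4/\Xi_2$, is non-increasing on the admissible range of $t$. Its derivative is $g'(t)=3A-2B(1-t)$. To control the ratio $B/A$ uniformly over distributions, I will use the moment bounds $m_4\le a^2 m_2$ and $s_4\le a^2 s_2$, which hold because the supports lie in $[-a,a]$. Together with $\Xi_2\le m_4+s_4$, these give $\Xi_2\le a^2\Xi_1$, and therefore $B/A=4c^2\Xi_1/\Xi_2\ge 4c^2/a^2$. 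It then follows that $g'(t)<0$ whenever $1-t>3a^2/(8c^2)=3\lambda^2/(128 D^2)$, a quantity of order $(\lambda/D)^2$.

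\emph{Main obstacle.} The hard part is showing that this last condition covers every admissible angle. That is, on every $u$ for which $\Gamma$ is non-empty ($d_{\mathrm F}\le d_{\mathrm R}$), one needs $\sin^2\theta$ to exceed $3\lambda^2/(128D^2)$. I will derive this directly from $d_{\mathrm F}\le d_{\mathrm R}$. That inequality reads $0.62\sqrt{D^3\sin^3\theta/\lambda}\le 2D^2\sin^2\theta/\lambda$, which forces $\sin\theta\ge(0.31)^2\lambda/D$. Squaring, $\sin^2\theta\ge(0.31)^4\lambda^2/D^2\approx 0.0092\,\lambda^2/D^2$, but the requirement is $\sin^2\theta>3\lambda^2/(128D^2)\approx 0.0234\,\lambda^2/D^2$. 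So the Fresnel--Rayleigh constraint alone does \emph{not} suffice: a thin strip of near-endfire angles remains where $g'(t)<0$ is not guaranteed by this argument. A pure derivative-sign argument may therefore fail near endfire. If it does, my fallback is to compare directly with the broadside value $g(0)=A+B$ on that strip. Since $t\to 1$ there, $g(t)\approx 4A$, and $4A<A+B$ because $B\ge 4c^2A/a^2\gg 3A$. Together with the non-increasing $\kappa$ factor, this still bounds the SPEB on that strip by its value at $u=0$.

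Combining the steps, the maximum is attained at $u=0$ and $r=d_{\mathrm R,\max}$, i.e., at $\mathbf p_{\mathrm{worst}}=[0,d_{\mathrm R,\max}]^{\mathrm T}$, for every centro-symmetric pair. This gives the claimed independence from $(w_{\mathrm r,\mathrm{sym}}(\cdot),w_{\mathrm t,\mathrm{sym}}(\cdot))$.
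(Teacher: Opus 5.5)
Your Step~1 is the paper's reduction to the Rayleigh boundary. Your Step~2 rests on the same inequality $\Xi_2\le a^2\Xi_1$, and in the end on the same broadside-versus-endfire comparison the paper makes: $g(0)-g(1)=B-3A\ge\bigl(4c^2/a^2-3\bigr)A>0$ whenever $c>\tfrac{\sqrt3}{2}a$. Where you differ is in aiming for monotonicity of $g$, and this is what creates your ``main obstacle.'' Monotonicity is stronger than needed, and it is genuinely false for some designs: $g'(t)=3A-2B(1-t)>0$ once $1-t<3A/(2B)$, and this threshold can lie inside the admissible angle range (e.g.\ for the three-point design with $q=1/2$). So the near-endfire comparison is the actual argument, not a fallback. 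As written (``$t\to1$, so $g(t)\approx 4A$'') it is only heuristic, because the strip has positive width and you never bound $B(1-t)^2$ on it. The paper closes this in one line. Since $g''(t)=2B>0$, $g$ is strictly convex in $t=u^2$, so $g(t)\le\max\{g(0),g(1)\}=g(0)$ on $[0,1]$, with strict inequality for $t\in(0,1]$. Your own derivative formula gives the same fix, because $g'$ is increasing in $t$: wherever $g'\ge0$, $g$ is non-decreasing up to $t=1$ and hence bounded by $g(1)=4A<A+B$. This bound holds on the whole superset $t\in[0,1]$, which contains the attained point $t=0$, so your analysis of $d_{\mathrm F}\le d_{\mathrm R}$ is unnecessary. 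Combined with $\kappa(c(1-t))\le\kappa(c)$, it gives the strict maximum at $u=0$, $r=d_{\mathrm R,\max}$.
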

	\begin{IEEEproof}
		Please see Appendix~\ref{prop.worstpoint-mn}.
	\end{IEEEproof}
    Therefore, the variable $\kappa(d_{\mathrm{R},\max})$ is a positive constant independent
	of the antenna distributions, which is further denoted as
	$\tilde{\kappa}=\kappa(d_{\mathrm{R},\max})$. Accordingly, variables $k_1$ and $k_2$ in \eqref{eq:mono-SPEB-sym}
	evaluated at the worst-case location are
	$d_{\mathrm{R},\max}^{2}$ and $4d_{\mathrm{R},\max}^{4}$,
	respectively.
	Accordingly, problem (P2) reduces to
	\begin{subequations} \label{eq:opt-sym-broadside-mn}
		\begin{alignat}{2}
			(\text{P3}):\quad & \min_{w_{\mathrm r,\mathrm{sym}}(\cdot),\,w_{\mathrm t,\mathrm{sym}}(\cdot)} \quad && \tilde{\kappa}\left(\frac{d_{\mathrm{R},\max}^2}{\Xi_1}+\frac{4d_{\mathrm {R},\max}^4}{\Xi_2}\right). \\
			& \hspace{1cm} \mathrm{s.t.} \quad && \eqref{eq. cons-distri-1}~\text{and}~\eqref{eq. cons-distri-2}. \nonumber
		\end{alignat}
	\end{subequations}
	It is worth noting that the objective function of (P3) depends on $w_{\mathrm r,\mathrm{sym}}(\cdot)$ and $w_{\mathrm t,\mathrm{sym}}(\cdot)$ solely through $\Xi_1$ and $\Xi_2$, i.e., through the aggregated second and fourth order moment terms associated with $X_{\mathrm r}$ and $X_{\mathrm t}$. This explicit moment dependence plays a crucial role in the subsequent structural analysis.

	\subsection{Moment-Based Optimal Distribution and Practical Realization}
	Let $\mathcal{M}_{\mathrm{r}} = \bigl\{m_0,m_2,m_4\bigr\}$ and $\mathcal{M}_{\mathrm{r}} = \bigl\{s_0,s_2,s_4\bigr\}$ denote the attainable moment sets w.r.t. $X_{\mathrm r}$ and $X_{\mathrm t}$, respectively. By inspecting the structure of (P3), we establish the following
	two lemmas for the optimization over $\mathcal{M}_{\mathrm{r}}$ and $\mathcal{M}_{\mathrm{t}}$.

	\begin{lemma}
		The minimum of $\text{(P3)}$ over moment sets $\mathcal{M}_{\mathrm{r}}$ and $\mathcal{M}_{\mathrm{t}}$ can be attained by centro-symmetric discrete distributions $w_{\mathrm t,\mathrm{sym}}(\cdot)$ and $w_{\mathrm r,\mathrm{sym}}(\cdot)$ supported on at most three points of the given aperture.\footnote{Any such centro-symmetric
			three-point distribution must include a location at the origin.}
	\end{lemma}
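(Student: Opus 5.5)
The approach is to separate the two distributions, since the objective of (P3) couples them only through $\Xi_1=m_2+s_2$ and $\Xi_2=(m_4-m_2^2)+(s_4-s_2^2)$. The objective $\tilde\kappa\bigl(d_{\mathrm R,\max}^2/\Xi_1+4d_{\mathrm R,\max}^4/\Xi_2\bigr)$ is strictly decreasing in both $\Xi_1>0$ and $\Xi_2>0$. Hence, for a fixed pair $(m_2,s_2)$, it suffices to maximize $m_4$ and $s_4$ separately, subject to the distribution having the prescribed second moment on $[-a,a]$. Thus (P3) reduces to a two-level problem: an outer choice of $(m_2,s_2)\in[0,a^2]^2$, and an inner truncated moment problem for each distribution.

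For the inner problem, I will use the Richter--Tchakaloff theorem in its moment form. For any probability measure $w$ on $\mathcal S\subseteq[-a,a]$, the moment vector $(m_0,m_2,m_4)=(1,\mathbb E X^2,\mathbb E X^4)$ is reproduced by a discrete measure with at most three atoms in the support, namely the number of moment functions $1,x^2,x^4$. This already shows that every attainable moment triple, and in particular the optimal one, is realized by an at most three-point distribution.

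To recover centro-symmetry, I will apply the theorem to the pushforward variable $Y=X^2\in[0,a^2]$ with functions $1,y,y^2$. This gives at most three atoms $y_j\ge0$, which I then lift back to atoms at $\pm\sqrt{y_j}$, each receiving half of the weight of $y_j$. The lifted measure is symmetric, but it may have up to five points, so it must be reduced further at the optimum. Since $y^2$ is convex, for fixed $\mathbb E Y=m_2$ the fourth moment $\mathbb E Y^2$ is maximized by concentrating the mass at the endpoints $y\in\{0,a^2\}$, i.e., the weight $m_2/a^2$ goes to $y=a^2$ and the remaining weight to $y=0$. This yields $m_4=a^2m_2$. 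Lifting back gives the symmetric three-point distribution with mass $m_2/(2a^2)$ at each of $\pm a$ and $1-m_2/a^2$ at the origin. The same construction applies to $s_2,s_4$.

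The center point appears because any symmetric distribution with an odd number of atoms must place one at $0$, which accounts for the footnote. It remains to verify that restricting to this family loses nothing. For any feasible pair, replacing each distribution by the endpoint-concentrated one with the same second moment keeps $\Xi_1$ fixed and does not decrease $m_4-m_2^2$ or $s_4-s_2^2$, so the objective does not increase. Consequently, the minimum of (P3) over the original moment sets equals the minimum over the one-parameter family, where the minimization over the parameters $(m_2,s_2)\in[0,a^2]^2$ of a continuous function on a compact set is attained. The grid $\mathcal S$ contains $0$ and $\pm a$ because $L$ is odd and the grid is uniform on $[-a,a]$, so these distributions are feasible for (P3).

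The main obstacle is connecting the cited theorem, which guarantees an at most three-point representation, to the symmetric three-point structure. My first attempt will be the convexity and endpoint argument in the variable $y=x^2$ described above, rather than a direct use of Tchakaloff's theorem. I also need to handle the degenerate cases: $m_2=0$, where $\Xi_1$ may vanish and which will be excluded by the optimality of the objective, and $m_2=a^2$, where the distribution has only two points. In both cases the "at most three" wording covers the situation.
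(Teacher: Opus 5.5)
Your proof is correct, but it takes a different route from the paper's. The paper proves this lemma by a pure existence argument. It applies the Richter--Tchakaloff theorem with $V=\operatorname{span}\{1,x^2,x^4\}$ to each side separately, so the triple $(m_0,m_2,m_4)$ (resp.\ $(s_0,s_2,s_4)$) of any distribution is reproduced by an atomic measure with at most three atoms, and the (P3) objective is unchanged. That argument never uses monotonicity of the objective and never identifies the atoms.

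Your route instead uses the structure of (P3): the objective is strictly decreasing in $\Xi_2$ at fixed $\Xi_1$. You combine this with the pointwise bound $y^2\le a^2y$ on $[0,a^2]$, i.e.\ $x^4\le a^2x^2$. This is exactly the argument the paper saves for its next lemma, which gives support $\{-a,0,+a\}$ with masses $m_2/(2a^2),\,1-m_2/a^2,\,m_2/(2a^2)$. So you prove the stronger statement and get this one as a corollary, and your Tchakaloff paragraph becomes unnecessary.

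What your version gains is that centro-symmetry comes for free. A Tchakaloff measure matching only the even moments need not be symmetric, since its odd moments are uncontrolled. For such a measure the symmetric SPEB expression behind (P3) need not even be the actual SPEB. Symmetrizing it can roughly double the number of atoms, which is exactly the obstacle you ran into with the five-point lift. The paper's version gains generality: it works for any objective depending only on those moments, monotone or not.

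One small fix is needed. The reduced objective is not continuous on $[0,a^2]^2$: $\Xi_2=0$ at all four corners $(m_2,s_2)\in\{0,a^2\}^2$, not only at $(0,0)$ where $\Xi_1$ also vanishes, so the objective is $+\infty$ there. Treat it as an extended-valued lower semicontinuous function that is finite in the interior, or minimize over a compact sublevel set, to conclude that the minimum is attained.
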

	
	\begin{IEEEproof}
		Please see Appendix~\ref{app.lemma3}.
	\end{IEEEproof}
	
	\begin{lemma}
		The minimum of $\text{(P3)}$ over $\mathcal{M}_{\mathrm{r}}$ and $\mathcal{M}_{\mathrm{t}}$ can be achieved by centro-symmetric discrete distributions supported at the center and two edges of the given aperture, i.e., $\{-a,0,+a\}$.
	\end{lemma}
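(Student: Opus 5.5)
By the previous lemma, we may restrict both $w_{\mathrm t,\mathrm{sym}}(\cdot)$ and $w_{\mathrm r,\mathrm{sym}}(\cdot)$ to centro-symmetric three-point distributions of the form
\begin{equation*}
w(0)=1-p,\qquad w(\pm b)=\tfrac{p}{2},
\end{equation*}
with $b\in(0,a]$ and $p\in[0,1]$. For such a distribution the even moments are $\mu_2=pb^2$ and $\mu_4=pb^4$. The contribution to $\Xi_1$ is therefore $pb^2$, and the contribution to $\Xi_2$ is
\begin{equation*}
\mu_4-\mu_2^2=b^4p(1-p).
\end{equation*}
The objective of (P3) is $\tilde\kappa\bigl(d_{\mathrm{R},\max}^2/\Xi_1+4d_{\mathrm{R},\max}^4/\Xi_2\bigr)$. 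It is strictly decreasing in both $\Xi_1$ and $\Xi_2$, since both terms are positive and $\Xi_1,\Xi_2>0$ at any point of finite objective.

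The key step is a monotonicity (domination) argument. Fix the weight $p$ of a given three-point distribution with $b<a$, and replace it by the distribution with the same $p$ but support $\{-a,0,a\}$. Then $pb^2\le pa^2$ and $b^4p(1-p)\le a^4p(1-p)$, with strict inequality whenever $p\in(0,1)$. Hence the receive-side contributions to $\Xi_1$ and $\Xi_2$ both weakly increase, and the same holds on the transmit side, applied independently. Because the objective is decreasing in $(\Xi_1,\Xi_2)$, the objective does not increase under this replacement. So an optimizer of the required form exists, possibly with $p\in\{0,1\}$.

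Degenerate cases also need care. For $p=0$ the distribution is a point mass at the origin. For $p=1$ it is the two-point distribution $\{\pm b\}$, whose $\Xi_2$ contribution is zero. Both are still supported on $\{-a,0,a\}$ after stretching to $b=a$, so they are covered. Moreover, at least one side must have $p\in(0,1)$, otherwise $\Xi_2=0$ and the objective is infinite. Finally, with the support fixed at $\{-a,0,a\}$, the problem reduces to choosing $(p_{\mathrm t},p_{\mathrm r})\in[0,1]^2$, which is a continuous problem on a compact set. The minimum is therefore attained, and the stated support structure follows.

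The main obstacle is the discretization. The problem is defined on the grid $\mathcal S$, so we must check that the edge points $\pm a$ and the center $0$ belong to $\mathcal S$. This holds because $\mathcal S$ is a uniform grid on $[-a,a]$ with $L$ odd, which contains $0$ and the endpoints. We also need that the three-point support delivered by the previous lemma lies in $\mathcal S$, and the replacement $b\to a$ keeps us in $\mathcal S$ since $\pm a\in\mathcal S$.

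A subtler point is whether the previous lemma's three-point representation already preserves $(\mu_2,\mu_4)$ exactly. If instead it only gives the moments via Richter--Tchakaloff, with $0$ forced by symmetry and an odd number of support points, I would rederive the parametrization directly. Any centro-symmetric distribution has moments $(\mu_2,\mu_4)$ with $\mu_4\le a^2\mu_2$, because $x^4\le a^2x^2$ on $[-a,a]$. Taking $p=\mu_2/a^2$ on $\{-a,0,a\}$ gives the same $\mu_2$ and $\mu_4'=a^2\mu_2\ge\mu_4$. This keeps $\Xi_1$ unchanged and increases $\Xi_2$, which is an even cleaner one-line argument I would present as the main proof.
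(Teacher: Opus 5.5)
Your proposal is correct, and the argument you designate as the main proof is exactly the paper's proof, applied to the receive and transmit sides in turn: replace each side's distribution by the one on $\{-a,0,a\}$ with mass $\mu_2/(2a^2)$ at each edge, which keeps $\mu_2$ (hence $\Xi_1$) fixed and raises $\mu_4$ to $a^2\mu_2\ge\mu_4$ via $x^4\le a^2x^2$, so $\Xi_2$ can only grow. Your preliminary stretching argument $b\to a$ built on the previous lemma, and your remarks on attainment and on $\{-a,0,a\}\subseteq\mathcal S$, are valid but not needed; the paper leaves grid membership implicit.
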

	
	\begin{IEEEproof}
		Please see Appendix~\ref{app.lemma4}.
	\end{IEEEproof}
	Consider such three-point centro-symmetric distributions $w_{\mathrm r}(\cdot)$
	and $w_{\mathrm t}(\cdot)$ supported on $\{-a,0,a\}$ with corresponding probability masses $\left(\frac{q_{\mathrm r}}{2},\,1-q_{\mathrm r},\,\frac{q_{\mathrm r}}{2}\right)
	$ and $\left(\frac{q_{\mathrm t}}{2},\,1-q_{\mathrm t},\,\frac{q_{\mathrm t}}{2}\right)$ respectively, where $q_{\mathrm r}, q_{\mathrm t} \in (0,1)$.
	Hence, we obtain 
	\begin{equation}
		\label{eq: moments}
		m_2 = q_{\mathrm r} a^2,
		\quad
		m_4 = q_{\mathrm r} a^4,
		\quad
		s_2=q_{\mathrm t} a^2,
		\quad
		s_4=q_{\mathrm t} a^4.
	\end{equation} By
	substituting \eqref{eq: moments} into (P3), the optimization problem becomes
	\begin{equation}
		\begin{aligned}
			\label{eq:active_J_qt_qr}
			\min_{q_{\mathrm t},q_{\mathrm r} \in (0,1)} \quad \tilde{\kappa} \left(\frac{4d_{\mathrm {R},\max}^4}
			{a^4\!\left[q_{\mathrm r}(1-q_{\mathrm r})+q_{\mathrm t}(1-q_{\mathrm t})\right]}
			+
			\frac{d_{\mathrm{R},\max}^2}
			{a^2(q_{\mathrm r}+q_{\mathrm t})}\right).	
		\end{aligned}
	\end{equation}
	By inspecting \eqref{eq:active_J_qt_qr}, we further obtain the following lemma:
	\begin{lemma}
		\label{lem:qtqr_active}
		Every minimizer of \eqref{eq:active_J_qt_qr}, denoted by $q_{\mathrm r}^\star$ and $q_{\mathrm t}^\star$, satisfies
		$q_{\mathrm r}^\star=q_{\mathrm t}^\star$.
	\end{lemma}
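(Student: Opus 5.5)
We argue by exchange (symmetrization) in the pair $(q_{\mathrm r},q_{\mathrm t})$. The objective in \eqref{eq:active_J_qt_qr} depends on $(q_{\mathrm r},q_{\mathrm t})$ only through two quantities: the sum $\Sigma = q_{\mathrm r}+q_{\mathrm t}$, and $\Phi = q_{\mathrm r}(1-q_{\mathrm r})+q_{\mathrm t}(1-q_{\mathrm t})$. Write the objective as $J(\Sigma,\Phi)=\tilde{\kappa}\bigl(\frac{4d_{\mathrm R,\max}^4}{a^4\Phi}+\frac{d_{\mathrm R,\max}^2}{a^2\Sigma}\bigr)$. This $J$ is strictly decreasing in $\Phi$ whenever $\Phi>0$, and $\Phi>0$ holds automatically for $q_{\mathrm r},q_{\mathrm t}\in(0,1)$.

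The key identity is
\begin{equation*}
\Phi = \Sigma - \bigl(q_{\mathrm r}^2+q_{\mathrm t}^2\bigr) = \Sigma - \frac{\Sigma^2}{2} - \frac{(q_{\mathrm r}-q_{\mathrm t})^2}{2}.
\end{equation*}
Hence, with $\Sigma$ fixed, $\Phi$ is maximized exactly when $q_{\mathrm r}=q_{\mathrm t}$, and the maximizer is unique.

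The proof then proceeds by contradiction. Suppose $(q_{\mathrm r}^\star,q_{\mathrm t}^\star)$ is a minimizer with $q_{\mathrm r}^\star\neq q_{\mathrm t}^\star$. Replace both entries by their average $\bar q=(q_{\mathrm r}^\star+q_{\mathrm t}^\star)/2$.
\begin{itemize}
\item Since $(0,1)$ is convex, $\bar q\in(0,1)$, so the new point is feasible.
\item The sum $\Sigma$ is unchanged, so the second term of $J$ is unchanged.
\item By the identity, $\Phi$ increases by exactly $(q_{\mathrm r}^\star-q_{\mathrm t}^\star)^2/2>0$.
\item Because $\frac{4d_{\mathrm R,\max}^4}{a^4\Phi}$ is strictly decreasing in $\Phi>0$, the first term strictly decreases.
\end{itemize}
The objective therefore strictly decreases, which contradicts optimality. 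So every minimizer satisfies $q_{\mathrm r}^\star=q_{\mathrm t}^\star$.

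The only step needing care is that the comparison uses strict inequality. This requires $\Phi>0$ at both points and $\tilde{\kappa}>0$. Both hold: $\tilde{\kappa}>0$ by definition, and $\Phi>0$ on the open domain. It is also worth remarking that the argument shows a minimizer is unaffected by averaging, but it does not by itself establish that a minimizer exists.

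Existence matters for what follows, since the next step is to reduce to the one-dimensional problem in $q=q_{\mathrm r}=q_{\mathrm t}$. The reduced objective is $\frac{2d_{\mathrm R,\max}^4}{a^4q(1-q)}+\frac{d_{\mathrm R,\max}^2}{2a^2q}$. It blows up at both endpoints: as $q\to1$ the first term diverges, and as $q\to0$ both terms diverge. Therefore a minimizer exists in the interior of $(0,1)$.
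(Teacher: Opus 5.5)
Your proof is correct and follows essentially the same route as the paper: you replace $(q_{\mathrm r},q_{\mathrm t})$ by the average $(\bar q,\bar q)$, note that the term depending on $q_{\mathrm r}+q_{\mathrm t}$ is unchanged, and use the identity $q_{\mathrm r}(1-q_{\mathrm r})+q_{\mathrm t}(1-q_{\mathrm t})=2\bar q(1-\bar q)-\frac{(q_{\mathrm r}-q_{\mathrm t})^2}{2}$ to obtain a strict decrease whenever $q_{\mathrm r}\neq q_{\mathrm t}$. Your extra remark on existence of a minimizer, via the blow-up of the reduced one-dimensional objective as $q\to 0$ and $q\to 1$, is correct and supplements a point the paper leaves implicit.
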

	\begin{IEEEproof}
		Please see Appendix~\ref{app.lem:qtqr_active}.
	\end{IEEEproof}
	Accordingly, for any minimizer, let $q = q_{\mathrm r}^\star=q_{\mathrm t}^\star$.
	By Proposition~1, the search for optimal distributions $w_{\mathrm t}(\cdot)$ and $w_{\mathrm r}(\cdot)$ is restricted to
	centro-symmetric distributions. Furthermore, Lemma~3 and Lemma~4 imply that, within the
	class of centro-symmetric distributions, the optimal distributions $w_{\mathrm t}(\cdot)$ and $w_{\mathrm r}(\cdot)$ can both be chosen to have support on $\{-a,0,+a\}$. This yields the
	following theorem.
	
	\begin{theorem}
		\label{thm:three-point-optimal-mn}
		The optimal distributions $w_{\mathrm t}(\cdot)$ and $w_{\mathrm r}(\cdot)$ of (P3) can be obtained by a three-point distribution supported on $\{-a,0,+a\}$ and share the same optimal probability masses $\left(\frac{q}{2},\,1-q,\,\frac{q}{2}\right)$.
	\end{theorem}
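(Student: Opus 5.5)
The theorem follows by chaining the earlier structural results, so I will assemble them and then verify the scalar reduction. By Proposition~\ref{prop:active_symmetry}, for any feasible pair $(w_{\mathrm r}(\cdot),w_{\mathrm t}(\cdot))$ the symmetrized pair has no larger worst-case metric $F$. Hence the optimum of (P2) can be searched only over centro-symmetric pairs. For such pairs, Proposition~\ref{prop:active_broadside} fixes the worst-case target at $[0,d_{\mathrm R,\max}]^{\mathrm T}$, independently of the distributions. This is exactly how (P2) reduces to (P3), whose objective depends only on $\Xi_1=m_2+s_2$ and $\Xi_2=(m_4-m_2^2)+(s_4-s_2^2)$.

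Next I invoke Lemma~3 and Lemma~4. Lemma~3 restricts attention to symmetric distributions with at most three support points, one of them at the origin, i.e.\ $\{-b,0,b\}$. Lemma~4 then shows the support may be taken as $\{-a,0,a\}$. Intuitively, replacing $b$ by $a$ while rescaling the edge mass can increase both $\Xi_1$ and $\Xi_2$, and the objective of (P3) is decreasing in each. Any such symmetric distribution is parametrized by masses $(q/2,1-q,q/2)$, which gives the moments in \eqref{eq: moments}. The degenerate endpoints $q\in\{0,1\}$ make $\Xi_2=0$ when both sides are degenerate, so the objective blows up and these endpoints are excluded. Problem (P3) therefore becomes the two-variable problem \eqref{eq:active_J_qt_qr}.

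The remaining step is the main point: showing that the minimizers of \eqref{eq:active_J_qt_qr} satisfy $q_{\mathrm r}^\star=q_{\mathrm t}^\star$ and that a minimizer exists. Equality is Lemma~\ref{lem:qtqr_active}. If I had to argue it, I would fix $\sigma=q_{\mathrm r}+q_{\mathrm t}$, so the second term is constant. The first term is decreasing in $q_{\mathrm r}(1-q_{\mathrm r})+q_{\mathrm t}(1-q_{\mathrm t})=\sigma-(q_{\mathrm r}^2+q_{\mathrm t}^2)$. For fixed sum, $q_{\mathrm r}^2+q_{\mathrm t}^2$ is strictly minimized at $q_{\mathrm r}=q_{\mathrm t}$, and any unequal pair is strictly improved by averaging.

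For existence, I reduce to the single variable $q$, with objective $\tilde\kappa\bigl(\frac{2d_{\mathrm R,\max}^4}{a^4 q(1-q)}+\frac{d_{\mathrm R,\max}^2}{2a^2 q}\bigr)$. This function is continuous on $(0,1)$ and diverges at both ends, so a minimizer $q\in(0,1)$ exists. It can be characterized by the first-order condition, a quadratic equation in $q$, but the theorem only needs its existence and that both sides share it.

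Combining the steps, optimal $w_{\mathrm t}$ and $w_{\mathrm r}$ for (P3) are both supported on $\{-a,0,a\}$ with common masses $(q/2,1-q,q/2)$.
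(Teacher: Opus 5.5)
Your proposal is correct and takes essentially the same route as the paper. Proposition~\ref{prop:active_symmetry} and Proposition~\ref{prop:active_broadside} reduce the problem to (P3), Lemmas~3--4 place the support on $\{-a,0,+a\}$, your fixed-sum averaging argument is exactly the proof of Lemma~\ref{lem:qtqr_active}, and your endpoint blow-up existence argument replaces the paper's explicit convex first-order solution $q^\star$.

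One minor point. Pairs with only one degenerate side (e.g.\ $q_{\mathrm r}=1$, $q_{\mathrm t}\in(0,1)$) have finite objective, so blow-up does not exclude them. Your averaging step still applies on the closed square $[0,1]^2$ and strictly improves such pairs, so the conclusion is unaffected.
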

	
	Therefore, problem \eqref{eq:active_J_qt_qr} reduces to 
	\begin{equation}
		\begin{aligned}
			\label{eq:active_J_q}
			\min_{q \in (0,1)} \quad \tilde{\kappa} \left(\frac{4d_{\mathrm {R},\max}^4}
			{2a^4q(1-q)}
			+
			\frac{d_{\mathrm{R},\max}^2}
			{2a^2q}\right).
		\end{aligned}
	\end{equation}
	This is a convex optimization problem, and the unique minimizer $q^\star\in(\frac{1}{2},1)$ can be obtained by taking the derivative with respect to $q$, yielding
	\begin{equation}
		\begin{aligned}
			\label{eq:qstar}
			&q^\star \;=\; 1 + \gamma - \sqrt{\gamma(1+\gamma)}.
		\end{aligned}
	\end{equation}
	where $\gamma = \frac{4d_{\mathrm{R},\max}^2}{a^2} = \frac{256a^2}{\lambda^2}$.
	It is worth noting that $q^\star$ monotonically decreases from $1$ and asymptotically approaches $0.5$ as the value of $a/\lambda$ increases, as shown in Fig.~\ref{fig:pstar}. In practical deployments where the aperture spans several wavelengths ($a \gg \lambda$), $q^\star$ rapidly converges to its asymptotic limit. Hence, we can directly set $q^\star = 0.5$, yielding a fixed probability mass allocation of $(0.25, 0.5, 0.25)$ for an efficient near-optimal distribution design.
	\begin{figure}[t]
		\centering
		\includegraphics[width=0.75\linewidth, keepaspectratio]{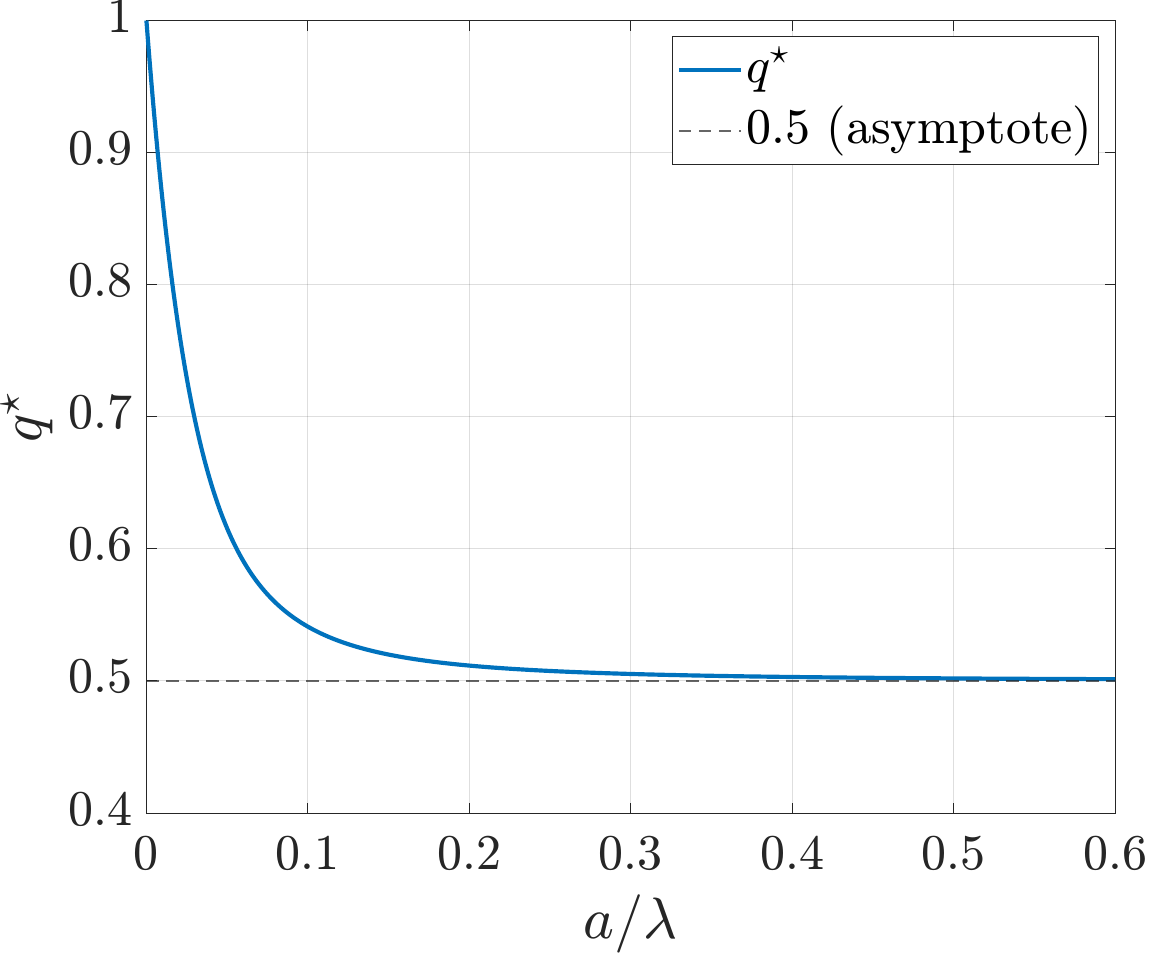}
		\caption{Optimal $q^\star$ versus $a/\lambda$.}
		\label{fig:pstar}
		\vspace{-5mm}
	\end{figure}
	Having characterized the closed-form optimal solution for problem $(\text{P1}')$, we now incorporate the inter-element spacing constraint and the given number of transmit antennas $M$ and receive antennas $N$ to ensure practical feasibility. Based on the optimal three-point centro-symmetric distribution obtained above, we propose a discrete realization strategy that approaches the theoretical optimum in $(\text{P1}')$ while satisfying the minimum spacing $d$.
	
	Specifically, on the receiver side, $w_{\mathrm r}(\cdot)$ only characterizes the geometric distribution of the receive antennas. Therefore, its practical implementation is solely determined by the placement of $\mathbf{x}_{\mathrm r}$. Moreover, as indicated by \eqref{eq.FIM-monostatic}, all receive antennas contribute to the receive-side Fisher information. Hence, the receiver-side deployment should fully utilize all $N$ receive antennas. To approximate the optimal receive-side distribution while maintaining the minimum inter-element spacing, a simple yet effective strategy is to allocate $N_{\mathrm{left}} = N_{\mathrm{right}} = \operatorname{round}(0.25N)$ antennas to form clusters at the left and right aperture endpoints, respectively, while the remaining $N_{\mathrm{center}} = N - N_{\mathrm{left}} - N_{\mathrm{right}}$ antennas are placed around the aperture center. Within each of these three clusters (i.e., the center and two edges), the antennas are placed uniformly with the fixed minimum spacing $d$.
	
	On the transmitter side, $w_{\mathrm t}(\cdot)$ is a power-weighted transmitter-location distribution, and hence is jointly determined by the transmit locations $\mathbf{x}_{\mathrm t}$ and the power allocation vector $\bm{\rho}$. One realization of the optimal three-point distribution $w_{\mathrm t}(\cdot)$ is to allocate the prescribed power $P$ to only three transmit antennas located at $\{-a,0,a\}$ with transmit powers $P_{\mathrm{left}} = P_{\mathrm{right}} = 0.25P$ and $P_{\mathrm{center}} = 0.5P$, while keeping the remaining transmit antennas inactive. This avoids the performance loss induced by minimum inter-element spacing constraints, thereby attaining the theoretical optimum at the transmitter side.
	\begin{remark} \label{remark:mono_tx}
		The above analysis reveals that the optimal transmit-side distribution
		$w_{\mathrm t}^\star(\cdot)$ can be exactly realized by activating only three
		antennas located at $-a$, $0$, and $a$, with transmit powers
		$\frac{q^\star}{2}P$, $(1-q^\star)P$, and
		$\frac{q^\star}{2}P$, respectively. Consequently, deploying more than three transmit antennas yields no further reduction in the optimal SPEB, as their sheer number becomes mathematically irrelevant once the optimal spatial distribution is established. From a hardware perspective, this minimal three-antenna architecture drastically reduces the required number of active RF chains, thereby minimizing both hardware cost and implementation complexity of the system.
	\end{remark}
	\vspace{-4mm}
	\section{Numerical Results}
	This section presents simulation results to validate the correctness of SPEB derivation and evaluate the effectiveness of the proposed joint transmit and receive antenna placements and the transmit power allocation.
	The simulation parameters are set as follows. The carrier frequency is $f_0 = 28~\text{GHz}$ \cite{ZhangTWC2022}, and the number of snapshots is $T = 1024$. Unless otherwise specified, the half-aperture $a = 25\lambda$, minimum inter-antenna spacing $d=\frac{\lambda}{2}$ \cite{YuanOJAP2023}, number of receive antennas $N = 25$, number of transmit antennas $M=12$, and average received $\text{SNR} = 5~\text{dB}$ \cite{JoJSAC2023}.
	\subsection{MSE versus SPEB}
	To validate the derived CRB and SPEB, we evaluate the MSE of a practical MLE \cite{BekkermanTSP}. Specifically, we consider a representative target located at $(u_0, r_0) = (0.2, 16\,\mathrm{m})$ to compare the resulting MSE obtained by MLE against the derived SPEB under the proposed transmit and receive array geometries and power allocation.
	\begin{figure}[t!]
		\centering
		\includegraphics[width=0.75\linewidth]{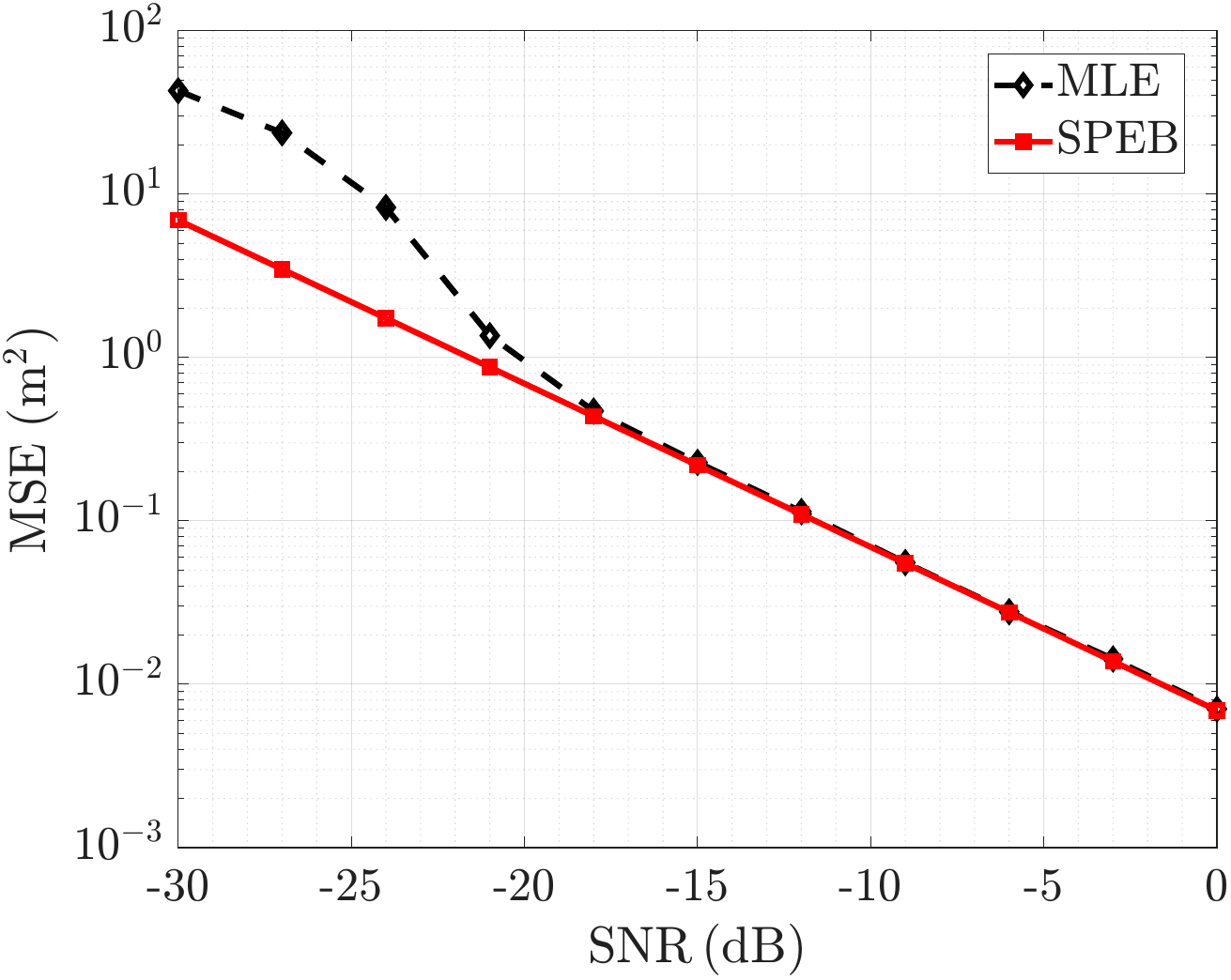}
		\caption{MSE of the MLE and the derived
			SPEB versus SNR under the proposed design.}
		\label{fig:MSE_comparison}
		\vspace{-2mm}
	\end{figure}
	Fig.~\ref{fig:MSE_comparison} illustrates the MSE of the MLE and the derived SPEB versus the received SNR. As observed, the SPEB decreases linearly with the SNR in dB, which mathematically reflects its inverse proportionality to the linear SNR, as indicated by the scaling factor $\kappa(r) = \frac{\sigma^2\lambda^2}{8\pi^2TPN|\alpha|^2\beta^2(r)}$ in \eqref{eq.FIM-monostatic}. Moreover, in the low-SNR regime, the MLE exhibits a noticeable performance gap from the SPEB, while in the high-SNR regime, its MSE closely matches the SPEB. This is consistent with \cite{kay1993fssp_estimation} and thereby validates the correctness of our CRB and SPEB derivations.
	\subsection{Accuracy of the Fresnel Approximation}
	To validate the accuracy of the Fresnel approximation adopted
	in \eqref{eq:steering_fresnel}, we compare the resulting SPEB with that
	numerically evaluated using the exact spherical-wave propagation distances
	in $r_m^{(\mathrm{t})}(\bm{\eta})$ and $r_n^{(\mathrm{r})}$. Specifically, we fix the target direction
	at the array broadside, i.e., $u=0$, and vary the target range $r$ over the
	radiating near-field region $[d_{\mathrm F},d_{\mathrm R}]$. The proposed transmit and receive antenna placements and transmit power allocation are adopted for both cases.
	We define the relative SPEB error as
	\begin{equation}
		\epsilon_{\mathrm{SPEB}}
		=
		\frac{
			\left|
			\mathrm{SPEB}_{\mathrm{exact}}
			-
			\mathrm{SPEB}_{\mathrm{Fresnel}}
			\right|
		}{
			\mathrm{SPEB}_{\mathrm{exact}}
		}
		\times 100\%,
		\label{eq:relative_SPEB_error}
	\end{equation}
	where $\mathrm{SPEB}_{\mathrm{exact}}$ and
	$\mathrm{SPEB}_{\mathrm{Fresnel}}$ denote the SPEBs obtained using the
	exact spherical-wave model and the Fresnel approximation,
	respectively.
	\begin{figure}[t!]
		\centering
		\includegraphics[width=0.75\linewidth]{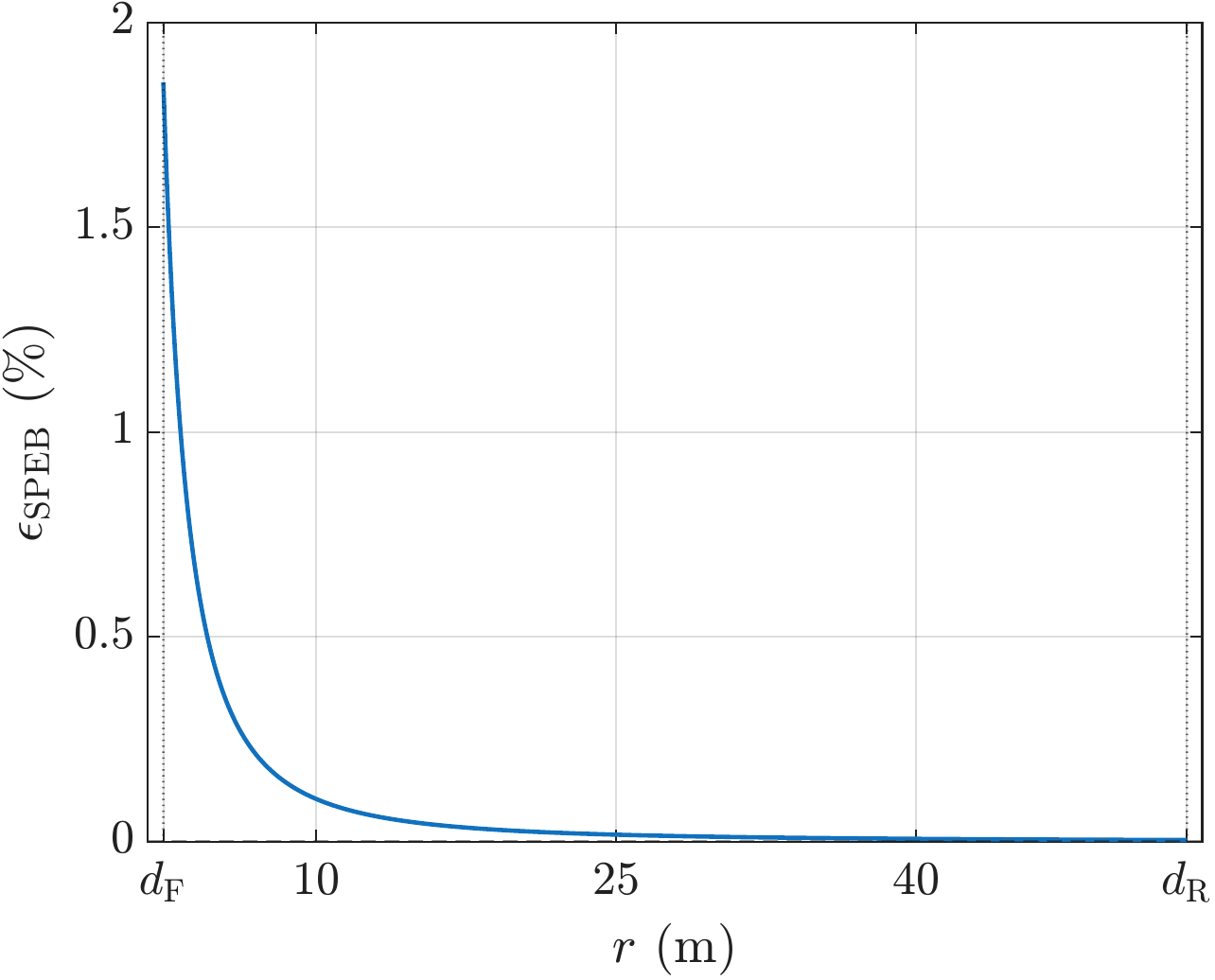}
		\caption{Relative SPEB error between the exact spherical-wave model
			and the Fresnel approximation versus the target range
			at $u=0$.}
		\label{fig:Fresnel_validation}
		\vspace{-2mm}
	\end{figure}
	Fig.~\ref{fig:Fresnel_validation} shows that the relative SPEB error is
	approximately $1.85\%$ at the Fresnel boundary
	$d_{\mathrm F}=2.35~\mathrm{m}$ and decreases rapidly as the target range
	increases. In particular, the error falls below approximately $0.1\%$ for
	$r\gtrsim10~\mathrm{m}$ and becomes negligible near the Rayleigh boundary
	$d_{\mathrm R}=53.53~\mathrm{m}$. This trend follows from the broadside
	distance expansion
	$
	\sqrt{r^2+x^2}-r
	=
	\frac{x^2}{2r}
	-
	\frac{x^4}{8r^3}
	+
	\mathcal{O}\left(\frac{x^6}{r^5}\right),
	$
	where the dominant neglected term decays with $r^{-3}$. These results
	demonstrate that the adopted Fresnel approximation provides
	an accurate characterization of the SPEB over the considered
	near-field region.
	
	\subsection{Performance Evaluation of Power Allocation}
	We next evaluate the proposed power allocation strategy under two fixed transmit and receive antenna placements, i.e., the proposed three-point-based antenna placement and the uniform placement. The latter is defined as follows:
	\begin{itemize}
		\item {\textbf{Uniform Placement} \cite{ShiTSP2021}:} 
		The transmit and receive antennas are uniformly placed over the entire aperture $[-a,a]$, with inter-element spacings $D/(M-1)$ and $D/(N-1)$, respectively.
	\end{itemize}
	Under each antenna placement, we compare the worst-case SPEB achieved by the proposed power allocation strategy with the following benchmark schemes:
	\begin{itemize}
		\item \textbf{Uniform power} \cite{WangTSP0124}: The total transmit power is distributed equally among all transmit antennas, i.e., $\rho_m=1/M$ for all $m$.
		
		\item \textbf{Two-edge power} \cite{WerfTSP2026}: The total transmit power is equally allocated to the two transmit antennas located at the aperture edges.
		
		\item \textbf{Three-point equal power:} The total transmit power is equally allocated to the three transmit antennas located at the two aperture edges and at, or closest to, the aperture center.
	\end{itemize}
	\begin{figure}[t]
		\centering
		\subfloat[Uniform placement\label{fig:speb-G1-M}]{
			\includegraphics[width=0.75\linewidth]{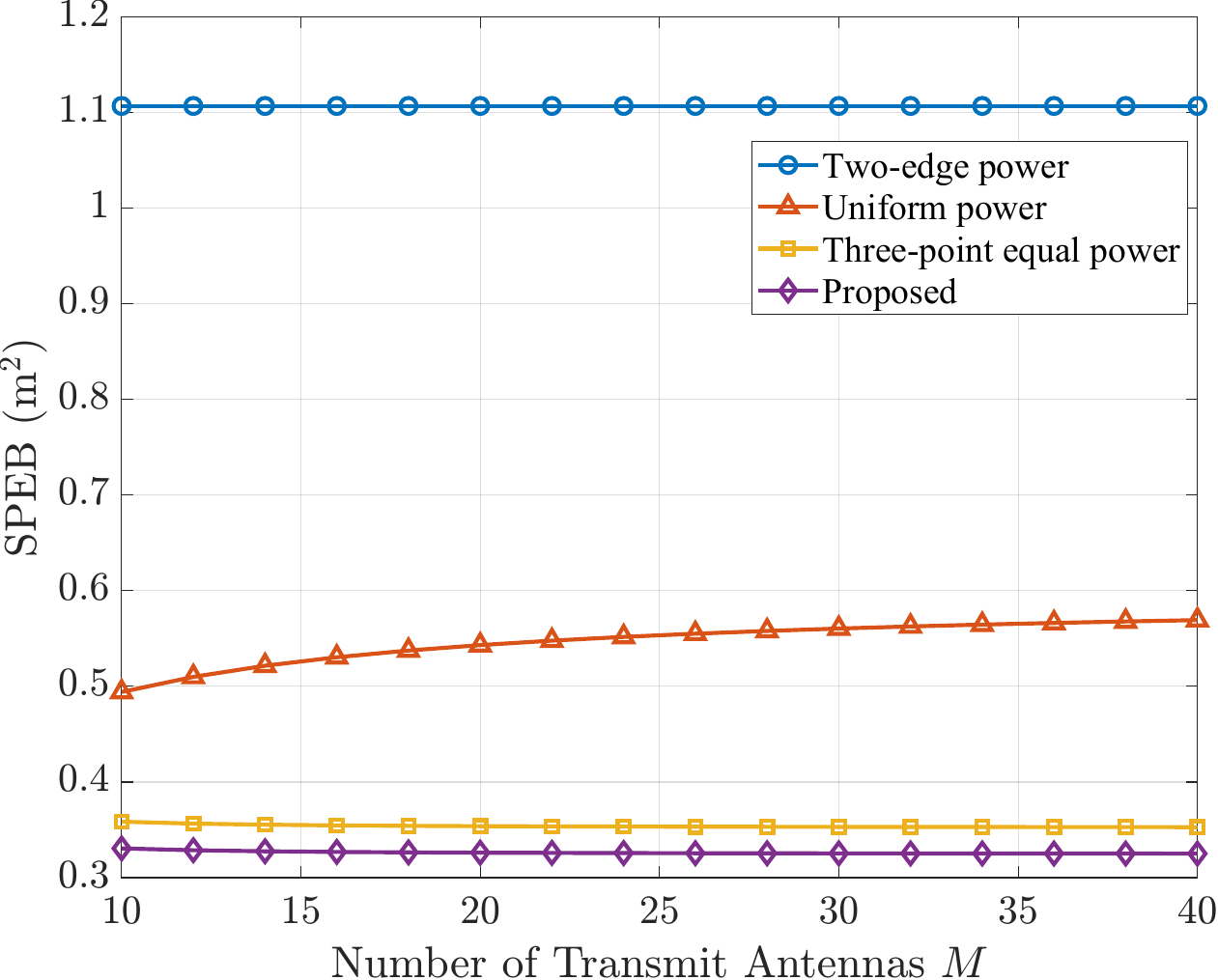}
		}
		\\
		\subfloat[Proposed placement\label{fig:speb-G2-M}]{
			\includegraphics[width=0.75\linewidth]{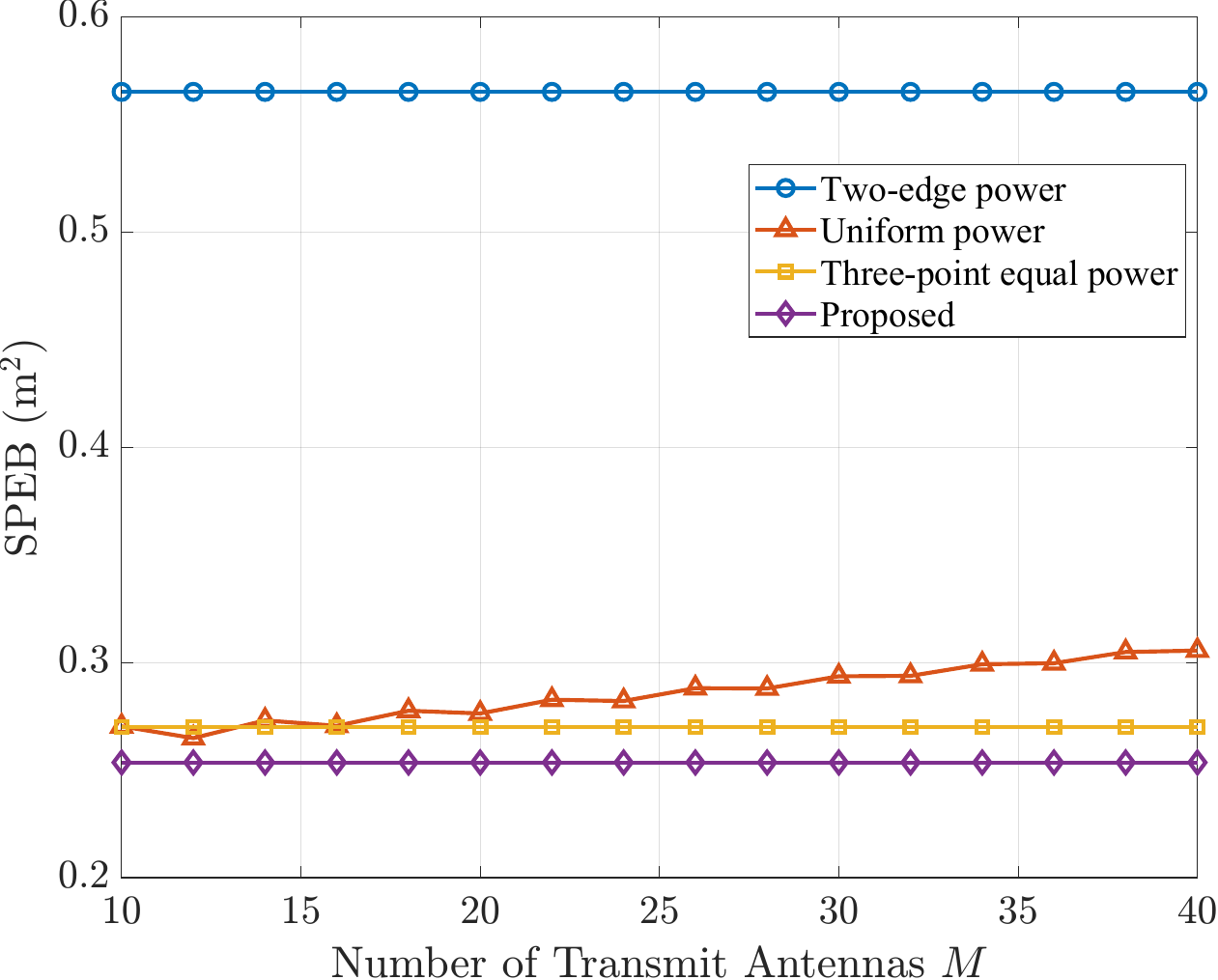}
		}
		\caption{Worst-case SPEB versus the number of transmit antennas $M$ under different antenna placements.}
		\label{fig:speb-G-M}
	\end{figure}
	
	Fig.~\ref{fig:speb-G-M} shows the worst-case SPEB versus the number of transmit antennas $M$ under various power allocation schemes for both the uniform antenna placement (Fig.~\ref{fig:speb-G-M}(a)) and the proposed antenna placement (Fig.~\ref{fig:speb-G-M}(b)). As observed, the proposed transmit power allocation achieves the lowest worst-case SPEB compared to all benchmark schemes across various numbers of transmit antennas, validating the effectiveness of the derived three-point power allocation strategy.
	Moreover, across various values of $M$, the worst-case SPEB obtained in Fig.~\ref{fig:speb-G-M}(b) with proposed antenna placement is generally lower than that in Fig.~\ref{fig:speb-G-M}(a) with uniform placement. This also proves the effectiveness of our proposed antenna placements.
	Furthermore, it can be observed that among the proposed, two-edge, and three-point equal power schemes, the worst-case SPEB remains unchanged as $M$ increases. This is consistent with \eqref{eq:Mt1}-\eqref{eq:Mt3}, which show that the transmit-side contribution to SPEB is determined by the power-weighted distribution $w_{\mathrm t}(\cdot)$, rather than the sheer number of transmit antennas. Since these three schemes activate only a fixed subset of transmit antennas, increasing $M$ does not alter the power-weighted spatial distribution, causing the worst-case SPEB to remain constant. 
	Conversely, the uniform power scheme exhibits a gradual performance degradation as $M$ increases. This is because the total transmit power is spread over all transmit antennas, thereby allocating more power to suboptimal positions that deviate from the derived optimal locations, i.e., edges and the center, causing the power-weighted distribution $w_{\mathrm t}(\cdot)$ to deviate from the optimal three-point distribution, ultimately leading to a higher worst-case SPEB.
	These results align with Remark~\ref{remark:mono_tx}, underscoring that simply increasing the number of transmit antennas does not necessarily improve sensing performance. Instead, an appropriate power-weighted spatial distribution is essential for attaining the theoretically optimal SPEB.
	\begin{figure}[t]
		\centering
		\subfloat[Uniform placement\label{fig:speb-G1-N}]{
			\includegraphics[width=0.75\linewidth]{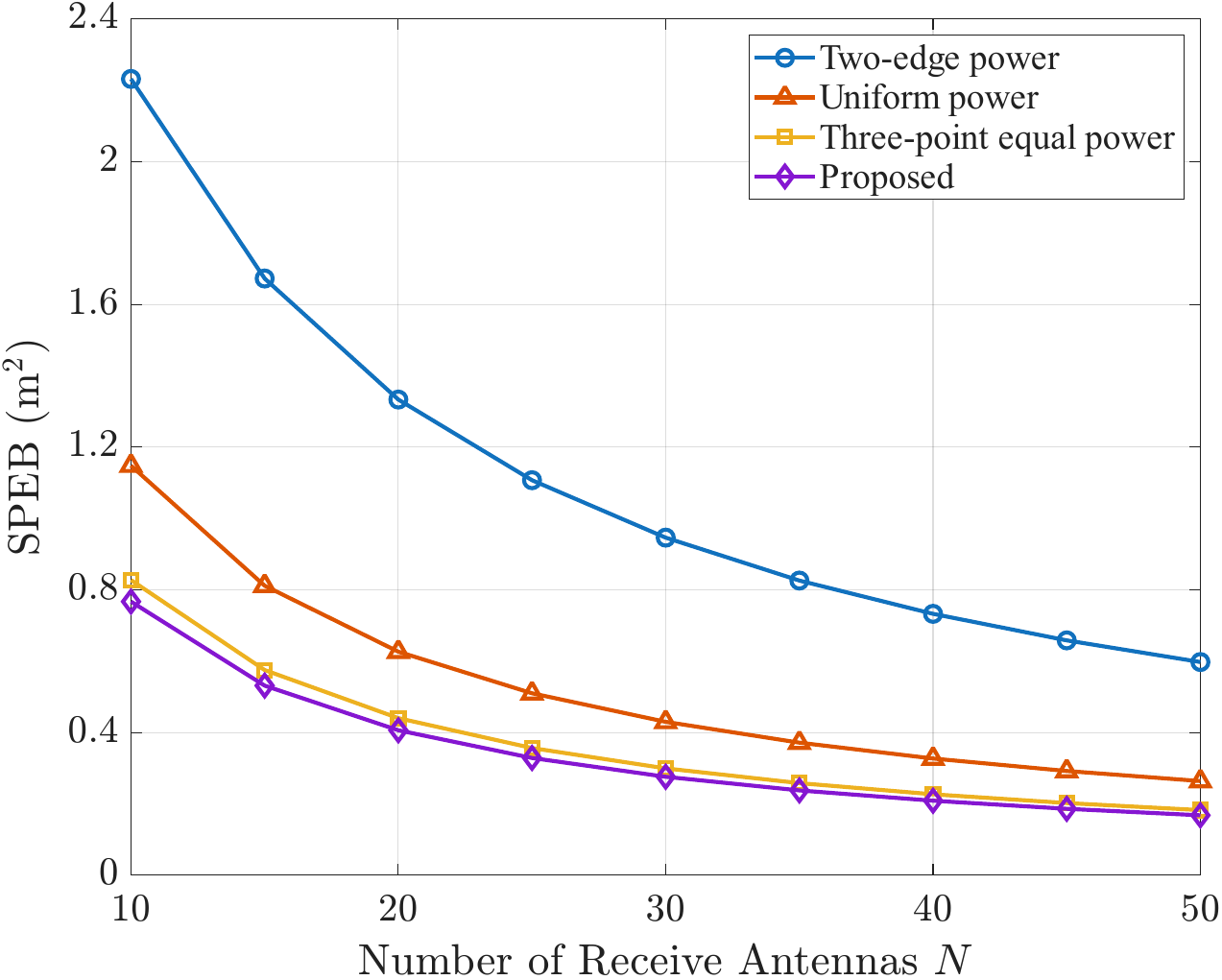}
		}
		\\ 
		\subfloat[Proposed placement\label{fig:speb-G2-N}]{
			\includegraphics[width=0.75\linewidth]{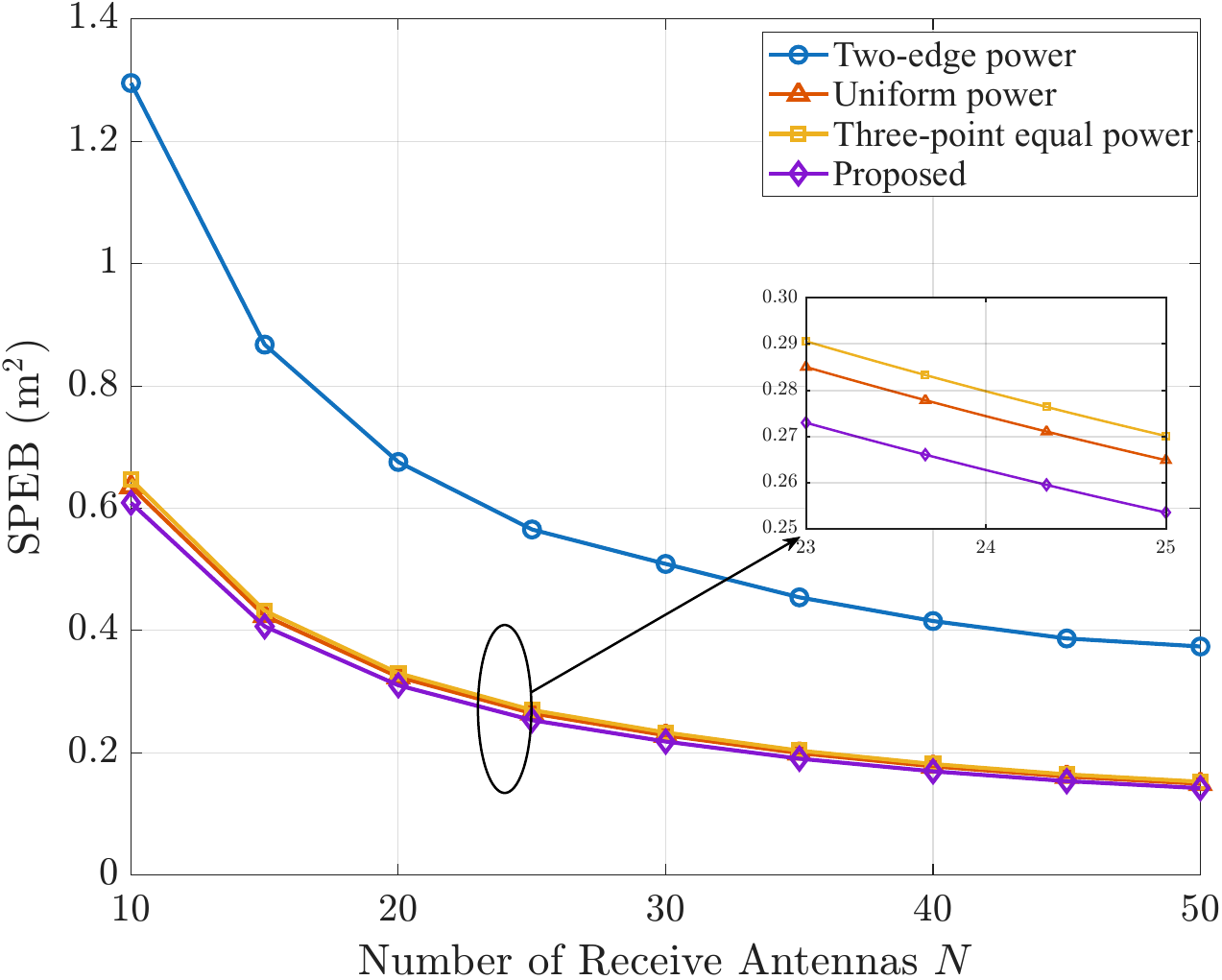}
		}
		\caption{Worst-case SPEB versus the number of receive antennas $N$ under different antenna placements.}
		\label{fig:speb-G-N}
	\end{figure}
	
	Fig.~\ref{fig:speb-G-N} shows the worst-case SPEB versus the number of receive antennas $N$ under the two antenna placement settings described above. It is observed that the proposed power allocation consistently achieves the lowest worst-case SPEB among all benchmark schemes. This result further proves the superiority of the proposed power-allocation design. Additionally, the worst-case SPEB decreases for all schemes as $N$ increases. This trend agrees with the scaling factor $\kappa(r)$ in \eqref{eq.FIM-monostatic}, which is inversely proportional to $N$. Therefore, adding receive antennas directly reduces the overall SPEB. This behavior is fundamentally different from the transmit-side trend in Fig.~\ref{fig:speb-G-M}, as every additional receive antenna provides an independent spatial observation that directly suppresses the SPEB, while the transmit antennas contribute to the sensing performance solely through their power-weighted distribution, rendering their sheer number irrelevant once the optimal distribution is established.
	\subsection{Performance of Joint Transmit/Receive Antenna Placements and Power Allocation Design}
	To further validate the effectiveness of the proposed joint design, we compare its performance against the following benchmark schemes:
	\begin{itemize} 
		\item \textbf{Sparse ULA} \cite{WangTSP0124}: Both the transmit and receive arrays are fixed as sparse ULAs spanning the entire aperture, under uniform power allocation.
		\item \textbf{Two-edge} \cite{WerfTSP2026}: The receive antennas are clustered at the two aperture edges with minimum inter-element spacing, while the transmit power is equally allocated to two active transmit antennas located at the aperture edges.
		\item \textbf{Opt-Tx:} The receive array is fixed as a sparse ULA, while the transmit antenna positions are optimized via exhaustive search over the aperture under uniform power allocation.
		\item \textbf{Opt-Rx:} The transmit array is fixed as a sparse ULA under uniform power allocation, while the receive antenna positions are optimized via exhaustive search.
		\item \textbf{Joint exhaustive search:} The transmit/receive antenna positions are jointly optimized via exhaustive
		search, while fixing the transmit-power fractions to the proposed closed-form
		values, i.e., $q^\star/2$, $1-q^\star$, and $q^\star/2$ for the
		leftmost, middle, and rightmost transmit antennas, respectively.
	\end{itemize}
	\begin{figure}[t]
		\centering
		\includegraphics[width=0.75\linewidth]{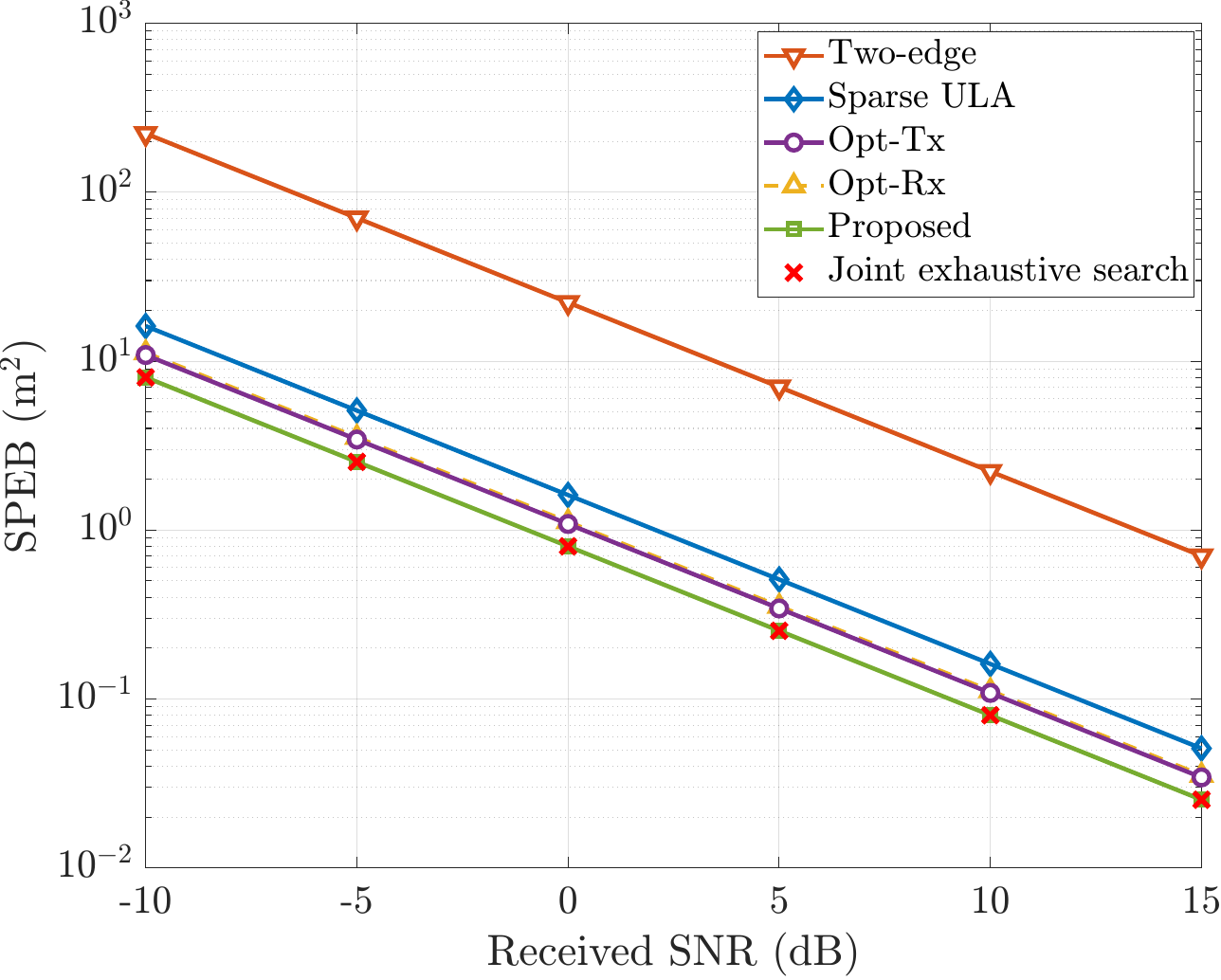}
		\caption{Worst-case $\text{SPEB}$ versus received SNR.}
		\label{fig:joint_SNR}
	\end{figure}
	\begin{figure}[t]
		\centering
		\includegraphics[width=0.75\linewidth]{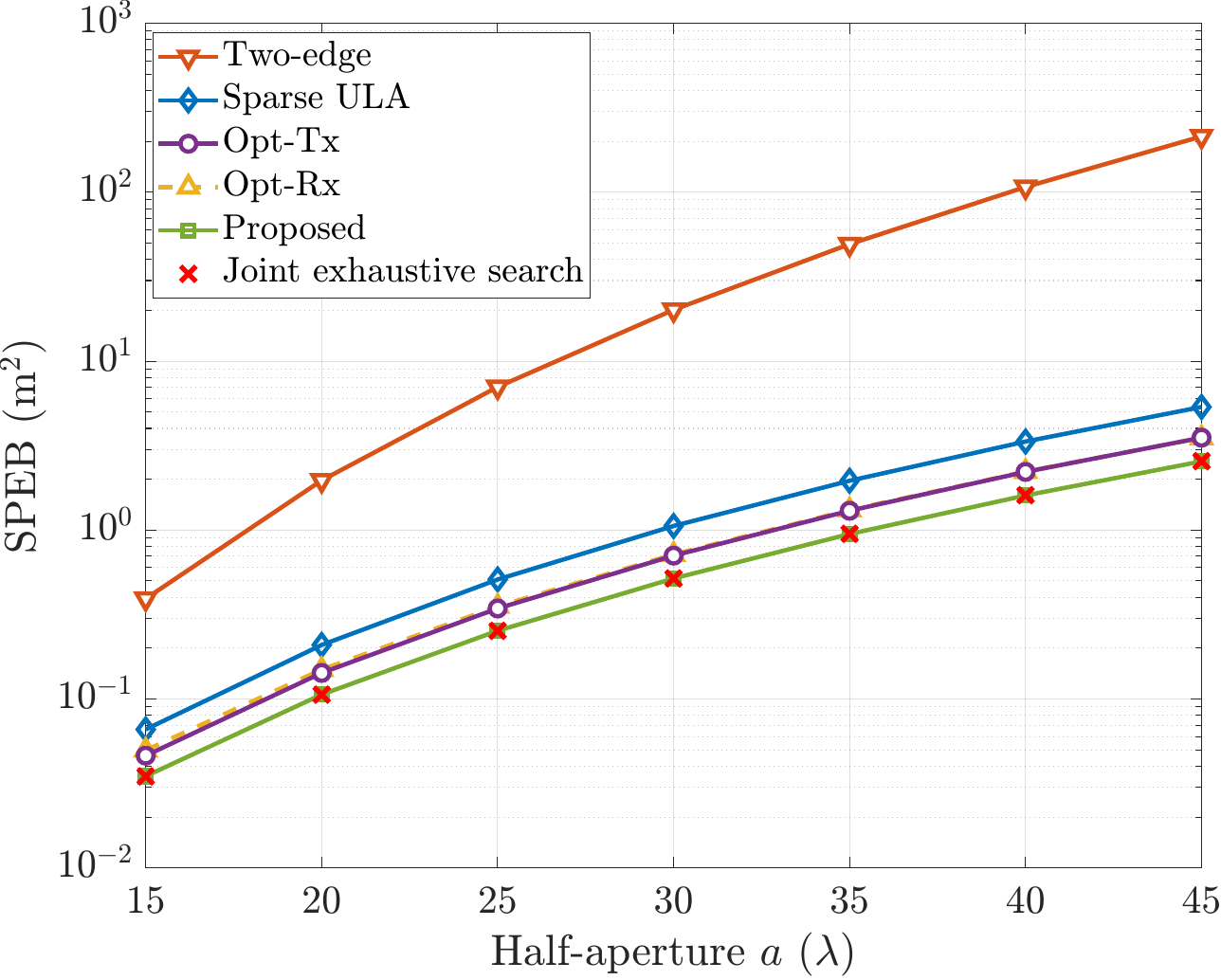}
		\caption{Worst-case $\text{SPEB}$ versus aperture size.}
		\label{fig:joint_design-aperture}
	\end{figure}
	
	Fig.~\ref{fig:joint_SNR} shows the worst-case SPEB versus the received SNR. As expected, the SPEB of all schemes decreases inversely with the received SNR, which is consistent with the SNR dependence of $\kappa(r)$ in \eqref{eq.FIM-monostatic}. The two-edge benchmark yields the largest SPEB, followed by the sparse ULA benchmark. By optimizing either the transmit or receive geometry, Opt-Tx and Opt-Rx achieve lower SPEB than the sparse ULA benchmark. Nevertheless, both schemes remain inferior to the proposed design, since they optimize only one-side array geometry and retain uniform transmit power allocation. In contrast, the proposed design jointly exploits the transmit/receive antenna placements and the transmit power allocation, leading to the lowest worst-case SPEB. Moreover, its performance closely matches that of the joint exhaustive-search benchmark over the entire SNR range, indicating the effectiveness of the proposed closed-form design with substantially lower computational complexity.
	
	Fig.~\ref{fig:joint_design-aperture} illustrates the worst-case SPEB against the half-aperture size $a$. It shows that our proposed design consistently achieves the lowest SPEB among all evaluated baselines and closely matches the joint exhaustive-search benchmark over the entire aperture
	range, further demonstrating its near-optimality. Besides, we observe that for all schemes, the worst-case SPEB increases as the aperture grows. This trend is mainly attributed to the aperture-dependent worst-case location. By Proposition~\ref{prop:active_broadside}, the worst-case target lies at the broadside Rayleigh boundary, i.e., \(d_{\mathrm{R},\max}=8a^2/\lambda\). Therefore, increasing the half-aperture quadratically enlarges the worst-case range, which intuitively makes the target position more difficult to estimate.
	\vspace{-5mm}
	\section{Conclusion}
	\vspace{-5mm}
	In this paper, we investigated the joint design of transmit and receive antenna placements and the transmit power allocation for monostatic near-field sensing. We first derived the closed-form CRB and SPEB, then formulated a joint design problem aiming to minimize the worst-case SPEB. By temporarily relaxing the minimum inter-spacing constraint, we established a three-point optimal structure supported at the two aperture edges and the center based on structural analysis and moment-based methods. The results shows that the optimal transmit-side distribution can be realized by activating only three transmit antennas with the derived power allocation, while an efficient clustered receive-antenna deployment was developed to satisfy the spacing constraint. Numerical results showed that the proposed design consistently outperformed conventional uniform position array geometries and power-allocation schemes with negligible computational complexity.
	\appendices
	\vspace{-4mm}
	\section{Derivation of the CRB Matrix}
	\vspace{-4mm}
	\label{appendix:FIM_MN_monostatic}
	Let $\tilde{\mathbf{g}} = \beta(r)\mathbf{g}$, then the observation model in \eqref{eq:vectorized_statistic} becomes $\tilde{\mathbf{y}}=\alpha\tilde{\mathbf{g}}+\tilde{\mathbf{n}}$.
	According to \cite{BoyerTSP, WangTSP0124}, the CRB matrix for estimating $\bm\eta$ can be expressed as
	$
	\mathrm{CRB}_{\bm\eta}
	=
	\mathbf J^{-1}
	=
	\frac{\sigma^2}{2|\alpha|^2}(\mathbf Q')^{-1}
	$, 
	where 
	\begin{equation}
		{\mathbf Q}'
		=
		\begin{bmatrix}
			\|\tilde{\mathbf g}_u\|^2
			-
			\frac{|\tilde{\mathbf g}^{\mathrm H}\tilde{\mathbf g}_u|^2}
			{\|\tilde{\mathbf g}\|^2},
			&
			\Re\!\left\{
			\tilde{\mathbf g}_u^{\mathrm H}\tilde{\mathbf g}_r
			-
			\frac{
				(\tilde{\mathbf g}^{\mathrm H}\tilde{\mathbf g}_r)
				(\tilde{\mathbf g}_u^{\mathrm H}\tilde{\mathbf g})
			}
			{\|\tilde{\mathbf g}\|^2}
			\right\}
			\\
			\Re\!\left\{
			\tilde{\mathbf g}_u^{\mathrm H}\tilde{\mathbf g}_r
			-
			\frac{
				(\tilde{\mathbf g}^{\mathrm H}\tilde{\mathbf g}_r)
				(\tilde{\mathbf g}_u^{\mathrm H}\tilde{\mathbf g})
			}
			{\|\tilde{\mathbf g}\|^2}
			\right\}
			&
			\|\tilde{\mathbf g}_r\|^2
			-
			\frac{|\tilde{\mathbf g}^{\mathrm H}\tilde{\mathbf g}_r|^2}
			{\|\tilde{\mathbf g}\|^2}
		\end{bmatrix}.
		\label{eq:appA_Qprime}
	\end{equation}
	with $\tilde{\mathbf g}_u = \frac{\partial \tilde{\mathbf g}}{\partial u}$ and $\tilde{\mathbf g}_r = \frac{\partial \tilde{\mathbf g}}{\partial r}$, 
	and matrix $\mathbf J = \begin{bmatrix}
		J_{uu} & J_{ur}\\
		J_{ru} & J_{rr}\\
	\end{bmatrix} = \frac{2|\alpha|^2}{\sigma^2}\mathbf Q'$.
	$\mathrm{CRB}_u$ and $\mathrm{CRB}_r$ are given by
	\begin{equation}
		\begin{aligned}
			\mathrm{CRB}_u
			=
			\frac{J_{rr}}{J_{uu}J_{rr}-J_{ur}^2},
			\quad
			\mathrm{CRB}_r
			=
			\frac{J_{uu}}{J_{uu}J_{rr}-J_{ur}^2}.
			\label{eq:CRB_entries_J}
		\end{aligned}
	\end{equation}
	By observing \eqref{eq:appA_Qprime}, we need to explicitly
	evaluate the terms
	$\|\tilde{\mathbf g}\|^2$,
	$\|\tilde{\mathbf g}_u\|^2$,
	$\|\tilde{\mathbf g}_r\|^2$,
	$\tilde{\mathbf g}^{\mathrm H}\tilde{\mathbf g}_u$,
	$\tilde{\mathbf g}^{\mathrm H}\tilde{\mathbf g}_r$, and
	$\tilde{\mathbf g}_u^{\mathrm H}\tilde{\mathbf g}_r$,
	where
	$\tilde{\mathbf g}_u
	=\frac{\partial\tilde{\mathbf g}}{\partial u}$
	and
	$\tilde{\mathbf g}_r
	=\frac{\partial\tilde{\mathbf g}}{\partial r}$.
	For notational convenience, we denote
	$
	\mathbf a_{{\mathrm t},u}=\frac{\partial \mathbf a_{\mathrm t}}{\partial u}
	$, $
	\mathbf a_{{\mathrm t},r}=\frac{\partial \mathbf a_{\mathrm t}}{\partial r}
	$, $
	\mathbf b_{{\mathrm r},u}=\frac{\partial \mathbf b_{\mathrm r}}{\partial u}
	$, $
	\mathbf b_{{\mathrm r},r}=\frac{\partial \mathbf b_{\mathrm r}}{\partial r}
	$
	and
	$
	\mathbf q=\mathbf S^{\mathrm T}\mathbf a_{\mathrm t}
	$, $
	\mathbf q_u=\mathbf S^{\mathrm T}\mathbf a_{{\mathrm t},u}
	$, $
	\mathbf q_r=\mathbf S^{\mathrm T}\mathbf a_{{\mathrm t},r}$. We also denote the first derivative of $\beta(r)$ by $\beta'(r)$.
	Define
	$
	\psi_u=\frac{\partial \psi(x;u,r)}{\partial u}
	$ and $
	\psi_r=\frac{\partial \psi(x;u,r)}{\partial r}.
	$ 
	Then
	$
	[\mathbf a_{{\mathrm t},u}]_m
	=
	\jmath\frac{2\pi}{\lambda}\psi_u^{(\mathrm{t})}[\mathbf a_{\mathrm t}]_m,
	$ $
	[\mathbf a_{{\mathrm t},r}]_m
	=
	\jmath\frac{2\pi}{\lambda}\psi_r^{(\mathrm{t})}[\mathbf a_{\mathrm t}]_m.$
	Accordingly,
	\begin{equation}
		\begin{aligned}
			\|\tilde{\mathbf g}\|^2
			&=
			\beta^2(r)TPN,
			\\
			\|\tilde{\mathbf g}_u\|^2
			&=
			\beta^2(r)\frac{4\pi^2}{\lambda^2}TPN
			\Bigg[
			\left(
			s_2+\frac{2u}{r}s_3+\frac{u^2}{r^2}s_4
			\right)
			+
			m_2
			\\
			&\quad+\frac{2u}{r}m_3+
			\frac{u^2}{r^2}m_4
			+
			2\left(
			s_1+\frac{u}{r}s_2
			\right)
			\left(
			m_1+\frac{u}{r}m_2
			\right)
			\Bigg],
			\\
			\|\tilde{\mathbf g}_r\|^2
			&=
			\left[\beta'(r)\right]^2TPN
			\\
			&\quad+
			\beta^2(r)
			\frac{4\pi^2}{\lambda^2}TPN
			\frac{(1-u^2)^2}{4r^4}
			\left(
			s_4+m_4+2s_2m_2
			\right),\\
			\tilde{\mathbf g}^{\mathrm H}\tilde{\mathbf g}_u
			&=
			\jmath\beta^2(r)\frac{2\pi}{\lambda}TPN
			\left[
			\left(
			s_1+\frac{u}{r}s_2
			\right)
			+
			\left(
			m_1+\frac{u}{r}m_2
			\right)
			\right],\\
			\tilde{\mathbf g}^{\mathrm H}\tilde{\mathbf g}_r
			&=
			\beta(r)\beta'(r)TPN
			+
			\jmath\beta^2(r)\frac{2\pi}{\lambda}TPN
			\frac{1-u^2}{2r^2}
			\left(
			s_2+m_2
			\right),\\
			\tilde{\mathbf g}_u^{\mathrm H}\tilde{\mathbf g}_r
			&=
			-\jmath\beta(r)\beta'(r)
			\frac{2\pi}{\lambda}TPN
			\left[
			\left(
			s_1+\frac{u}{r}s_2
			\right)
			+
			\left(
			m_1+\frac{u}{r}m_2
			\right)
			\right]
			\\
			&\quad+
			\beta^2(r)\frac{4\pi^2}{\lambda^2}TPN
			\frac{1-u^2}{2r^2}
			\Bigg[
			s_3+\frac{u}{r}s_4
			+
			m_3+\frac{u}{r}m_4
			\\
			&\quad+
			\left(
			s_1+\frac{u}{r}s_2
			\right)m_2
			+
			s_2\left(
			m_1+\frac{u}{r}m_2
			\right)
			\Bigg].
		\end{aligned}
		\label{eq:appA_tilde_gu_gr_raw}
	\end{equation}
	Substituting \eqref{eq:appA_tilde_gu_gr_raw} into
	\eqref{eq:appA_Qprime} and using
	\eqref{eq:Mr1}-\eqref{eq:Mt3}, all the terms involving
	$\beta'(r)$ cancel. Consequently, the entries of
	${\mathbf Q}'$ are given by
	\begin{equation}
		\begin{aligned}
			{Q}'_{11}
			&=
			\beta^2(r)
			\frac{4\pi^2}{\lambda^2}TPN
			\bigl(
			M_1^{(\mathrm r)}
			+
			M_1^{(\mathrm t)}
			\bigr),
			\\
			{Q}'_{22}
			&=
			\beta^2(r)
			\frac{4\pi^2}{\lambda^2}TPN
			\bigl(
			M_2^{(\mathrm r)}
			+
			M_2^{(\mathrm t)}
			\bigr),
			\\
			{Q}'_{12}
			&={Q}'_{21}=
			\beta^2(r)
			\frac{4\pi^2}{\lambda^2}TPN
			\bigl(
			M_3^{(\mathrm r)}
			+
			M_3^{(\mathrm t)}
			\bigr).
		\end{aligned}
		\label{eq:appA_tilde_Q_entries}
	\end{equation}
	Finally, since
	$
	\mathbf J
	=
	\frac{2|\alpha|^2}{\sigma^2}
	{\mathbf Q}'
	$
	and
	\eqref{eq:CRB_entries_J}, we obtain \eqref{eq.FIM-monostatic}. This completes the derivation.
	\vspace{-2mm}
	\section{Proof of Lemma~2} \label{app.Jacobian-SPEB}
	The Cartesian position parameter $\mathbf{p} = (p_1,p_2)^\mathrm{T}$ is related to
	$\bm{\eta} = (u,r)^\mathrm{T}$ through
	$
	p_1 = r u, 
	p_2 = r\sqrt{1-u^2},
	$
	whose Jacobian matrix is
	\begin{equation}
		\begin{aligned}
			\mathbf{C}(\bm{\eta})
			= \frac{\partial \mathbf{p}}{\partial \bm{\eta}}
			=
			\begin{bmatrix}
				\displaystyle \frac{\partial p_1}{\partial u} & \displaystyle \frac{\partial p_1}{\partial r} \\[0.6ex]
				\displaystyle \frac{\partial p_2}{\partial u} & \displaystyle \frac{\partial p_2}{\partial r}
			\end{bmatrix} 
			=
			\begin{bmatrix}
				r & u\\[0.6ex]
				-\dfrac{ru}{\sqrt{1-u^2}} & \sqrt{1-u^2}
			\end{bmatrix}.
		\end{aligned}
	\end{equation}
	Then, the CRB matrix for estimating $\mathbf{p}$ is given by \cite{kay1993fssp_estimation}
	$
	\mathrm{CRB}_{\mathbf{p}}
	= \mathbf{C}(\bm{\eta})\,\mathrm{CRB}_{\bm{\eta}}\,\mathbf{C}(\bm{\eta})^\mathrm{T}.
	$
	By calculating 
	$
	\mathbf{C}(\bm{\eta})^\mathrm{T} \mathbf{C}(\bm{\eta})
	=
	\begin{bmatrix}
		\dfrac{r^2}{1-u^2} & 0\\[0.6ex]
		0 & 1
	\end{bmatrix}
	$,
	we finally obtain
	\begin{equation}
		\begin{aligned}
			\mathrm{SPEB}(\mathbf{x}_{\mathrm t},\mathbf{x}_{\mathrm r},\mathbf{p})
			&= 
			\tr\!\big(\mathrm{CRB}_{\bm{\eta}}\,\mathbf{C}(\bm{\eta})^\mathrm{T} \mathbf{C}(\bm{\eta})\big) \\
			&=
			\frac{r^2}{1-u^2}\,\mathrm{CRB}_u
			+ \mathrm{CRB}_r.
		\end{aligned}
	\end{equation}
	\section{Proof of Proposition~1}
	\label{prop.symmetry-mn}
	We observe from \eqref{eq:Mr1}--\eqref{eq:Mt3} that the quantities
	determining the entries of $\mathbf{J}$ depend only on the moments
	$m_k$ and $s_k$ of $X_{\mathrm r}$ and $X_{\mathrm t}$, respectively.
	Moreover, for each $\ell\in\{\mathrm t,\mathrm r\}$, the corresponding
	expressions can be equivalently rewritten in terms of variances and
	covariances, i.e.,
	\begin{equation}
		\begin{aligned}
			\label{eq: MR1-Mt3_var_cov}
			M_1^{(\ell)}
			&=
			\mathrm{Var}(X_\ell)
			+\frac{2u}{r}\mathrm{Cov}(X_\ell,X_\ell^2)
			+\frac{u^2}{r^2}\mathrm{Var}(X_\ell^2),\\
			M_2^{(\ell)}
			&=
			\frac{(1-u^2)^2}{4r^4}\mathrm{Var}(X_\ell^2),\\
			M_3^{(\ell)}
			&=
			\frac{1-u^2}{2r^2}
			\left(
			\mathrm{Cov}(X_\ell,X_\ell^2)+\frac{u}{r}\mathrm{Var}(X_\ell^2)
			\right).
		\end{aligned}
	\end{equation}
	Then, we introduce the covariance matrix $\mathbf \Sigma\!\big(w_\ell(\cdot)\big) \in \mathbb S_{++}^2$
	\begin{equation}
		\mathbf \Sigma\!\big(w_\ell(\cdot)\big)
		=
		\begin{bmatrix}
			\mathrm{Var}(X_\ell) & \mathrm{Cov}(X_\ell,X_\ell^2)\\
			\mathrm{Cov}(X_\ell,X_\ell^2) & \mathrm{Var}(X_\ell^2)
		\end{bmatrix}.
	\end{equation}
	which can further re-express \eqref{eq: MR1-Mt3_var_cov} in a quadratic form. Specifically,
	\begin{equation}
		\begin{aligned}
			M_1^{(\ell)}
			&=
			\mathbf{h}_u^{\mathrm T}\mathbf \Sigma\!\big(w_\ell(\cdot)\big)\mathbf{h}_u,\quad
			M_2^{(\ell)}
			=
			\mathbf{h}_r^{\mathrm T}\mathbf \Sigma\!\big(w_\ell(\cdot)\big)\mathbf{h}_r,\\
			M_3^{(\ell)}
			&=
			\mathbf{h}_u^{\mathrm T}\mathbf \Sigma\!\big(w_\ell(\cdot)\big)\mathbf{h}_r.
		\end{aligned}
	\end{equation}
	where
	$
	\mathbf{h}_u=
	\begin{bmatrix}
		1\\
		u/r
	\end{bmatrix}
	$, $
	\mathbf{h}_r=
	\begin{bmatrix}
		0\\
		(1-u^2)/(2r^2)
	\end{bmatrix}.
	$
	Therefore, the entries of $\mathbf J$ are given by
	\begin{equation}
		\begin{aligned}
			J_{uu}
			&=
			\frac{1}{\kappa(r)}
			\sum_{\ell\in\{r,t\}}
			\mathbf{h}_u^{\mathrm T}\mathbf \Sigma\!\big(w_\ell(\cdot)\big)\mathbf{h}_u,\\
			J_{rr}
			&=
			\frac{1}{\kappa(r)}
			\sum_{\ell\in\{r,t\}}
			\mathbf{h}_r^{\mathrm T}\mathbf \Sigma\!\big(w_\ell(\cdot)\big)\mathbf{h}_r,\\
			J_{ur}
			&=
			\frac{1}{\kappa(r)}
			\sum_{\ell\in\{r,t\}}
			\mathbf{h}_u^{\mathrm T}\mathbf \Sigma\!\big(w_\ell(\cdot)\big)\mathbf{h}_r.
		\end{aligned}
	\end{equation}
	Introduce the aggregate covariance matrix
	$   \mathbf \Sigma\!\big(w_{\mathrm r}(\cdot),w_{\mathrm t}(\cdot)\big)
	=
	\mathbf \Sigma\!\big(w_{\mathrm r}(\cdot)\big)
	+
	\mathbf \Sigma\!\big(w_{\mathrm t}(\cdot)\big).$
	Then, $\mathbf{J}$ can be written as
	\begin{equation}
		\mathbf{J}
		=
		\frac{1}{\kappa(r)}
		\mathbf{H}^{\mathrm T}
		\mathbf \Sigma\!\big(w_{\mathrm r}(\cdot),w_{\mathrm t}(\cdot)\big)
		\mathbf{H}.
	\end{equation}
	where
	$
	\mathbf{H}=\begin{bmatrix}
		\mathbf{h}_u & \mathbf{h}_r
	\end{bmatrix}.
	$
	For notational simplicity, denote
	$
	\mathbf \Sigma
	=
	\Sigma\!\big(w_\ell(\cdot)\big)
	$, then
	$
	\mathbf{J}^{-1}
	=
	\kappa(r)
	\mathbf{H}^{-1}
	\mathbf{\Sigma}^{-1}
	(\mathbf{H}^{-1})^{\mathrm T}.
	$
	Then, by letting \(\mathbf{e}_1=[1,0]^{\mathrm T}\), \(\mathbf{e}_2=[0,1]^{\mathrm T}\), and
	$
	\mathbf{v}(u,r)=[2ur,-2r^2]^{\mathrm T}$, we have
	\begin{equation}
		\begin{aligned}
			\mathrm{CRB}_u
			&=
			\mathbf{e}_1^{\mathrm T}\mathbf{J}^{-1}\mathbf{e}_1
			=
			\kappa(r)\mathbf{e}_1^{\mathrm T}\mathbf{\Sigma}^{-1}\mathbf{e}_1,\\
			\mathrm{CRB}_r
			&=
			\mathbf{e}_2^{\mathrm T}\mathbf{J}^{-1}\mathbf{e}_2
			=
			\frac{\kappa(r)}{(1-u^2)^2}
			\mathbf{v}(u,r)^{\mathrm T}\mathbf{\Sigma}^{-1}\mathbf{v}(u,r).
		\end{aligned}
	\end{equation}
	Accordingly, the derived SPEB can be further expressed as
	\begin{equation}
		\mathrm{SPEB}
		=
		\kappa(r)\frac{r^2}{1-u^2}\mathbf{e}_1^{\mathrm T}\mathbf{\Sigma}^{-1}\mathbf{e}_1
		+
		\frac{\kappa(r)}{(1-u^2)^2}
		\mathbf{v}(u,r)^{\mathrm T}\mathbf{\Sigma}^{-1}\mathbf{v}(u,r).
	\end{equation}
	This motivates the linear functional: for any
	\(\mathbf{Z}\in\mathbb S_{++}^2\),
	\begin{equation}
		\begin{aligned} 
			\label{eq.functional_linear}
			L_{u,r}(\mathbf{Z})
			=
			\kappa(r)\frac{r^2}{1-u^2}\mathbf{e}_1^{\mathrm T}\mathbf{Z}\mathbf{e}_1
			+
			\frac{\kappa(r)}{(1-u^2)^2}
			\mathbf{v}(u,r)^{\mathrm T}\mathbf{Z}\mathbf{v}(u,r).
		\end{aligned}
	\end{equation}
	The worst-case objective can be accordingly written as
	$	F\!\big(w_{\mathrm r}(\cdot),w_{\mathrm t}(\cdot)\big)
	=
	\max_{(u,r)\in\Gamma}
	L_{u,r}
	\left(
	\mathbf{\Sigma}^{-1}
	\right).$
	This representation collects all distribution-dependent moment information into \(\boldsymbol{\Sigma}\), and \(L_{u,r}(\cdot)\) maps \(\boldsymbol{\Sigma}^{-1}\) to the corresponding SPEB. Therefore, the impact of symmetrization can be analyzed through its induced change in \(\boldsymbol{\Sigma}\). Accordingly, 
	for each $\ell\in\{\mathrm t,\mathrm r\}$, define its reflected distribution
	$
	\check w_\ell(\zeta)=w_\ell(-\zeta),
	$
	and its symmetrized distribution
	$
	w_{\ell,\mathrm{sym}}(\zeta)
	=
	\frac12\big(w_\ell(\zeta)+\check w_\ell(\zeta)\big).
	$
	Since $w_{\ell,\mathrm{sym}}(\cdot)$ is the average of $w_\ell(\cdot)$ and its reflection,
	we have
	$
		\mathbf \Sigma\!\big(w_{\ell,\mathrm{sym}}(\cdot)\big)
		=
		\begin{bmatrix}
			\mathbb E[X_\ell^2] & 0\\
			0 & \mathrm{Var}(X_\ell^2)
		\end{bmatrix}.
	$
	On the other hand, reflection preserves $\mathrm{Var}(X_\ell)$ and
	$\mathrm{Var}(X_\ell^2)$, while flipping the sign of
	$\mathrm{Cov}(X_\ell,X_\ell^2)$. Therefore,
	\begin{equation}
		\frac12\Big(
		\mathbf \Sigma\!\big(w_\ell(\cdot)\big)
		+
		\mathbf \Sigma\!\big(\check w_\ell(\cdot)\big)
		\Big)
		=
		\begin{bmatrix}
			\mathrm{Var}(X_\ell) & 0\\
			0 & \mathrm{Var}(X_\ell^2)
		\end{bmatrix}.
	\end{equation}
	It follows that
	\begin{equation}
		\label{eq: 47}
		\mathbf \Sigma\!\big(w_{\ell,\mathrm{sym}}(\cdot)\big)=
		\frac12\Big(
		\mathbf \Sigma\!\big(w_\ell(\cdot)\big)
		+
		\mathbf \Sigma\!\big(\check w_\ell(\cdot)\big)
		\Big)
		+
		\begin{bmatrix}
			\mathbb E^2[X_\ell]& 0\\
			0 & 0
		\end{bmatrix}.
	\end{equation} 	
	For notational simplicity, we denote
	$\check{\mathbf \Sigma}
	=
	\mathbf \Sigma\!\big(\check w_{\mathrm r}(\cdot),\check w_{\mathrm t}(\cdot)\big)
	$ and $
	\mathbf \Sigma_{\mathrm{sym}}
	=
	\mathbf \Sigma\!\big(w_{\mathrm r,\mathrm{sym}}(\cdot),w_{\mathrm t,\mathrm{sym}}(\cdot)\big).
	$
	Summing the \eqref{eq: 47} over $\ell\in\{\mathrm r, \mathrm{t}\}$ yields
	$
	\mathbf \Sigma_{\mathrm{sym}}
	\succeq
	\frac12\big(\mathbf \Sigma+\check{\mathbf \Sigma}\big).
	$
	Since the matrix inverse is monotone decreasing on $\mathbb S_{++}^2$, hence
	$\mathbf \Sigma_{\mathrm{sym}}^{-1}
	\preceq
	\left(\frac{\mathbf \Sigma+\check{\mathbf \Sigma}}{2}\right)^{-1}.$
	Moreover, the matrix inverse is operator convex on $\mathbb S_{++}^2$, and hence
	$
	\left(\frac{\mathbf \Sigma+\check{\mathbf \Sigma}}{2}\right)^{-1}
	\preceq
	\frac12\big(\mathbf \Sigma^{-1}+\check{\mathbf \Sigma}^{-1}\big),
	$
	yielding
	$
	\mathbf \Sigma_{\mathrm{sym}}^{-1}
	\preceq
	\frac12\big(\mathbf \Sigma^{-1}+\check{\mathbf \Sigma}^{-1}\big).
	$
	
	We next note that the mapping $\mathbf{Z}\mapsto L_{u,r}(\mathbf{Z})$ is linear and
	Loewner-monotone on $\mathbb S_{++}^2$, since both coefficients
	$\kappa(r)\frac{r^2}{1-u^2}$ and $\frac{\kappa(r)}{(1-u^2)^2}$ are nonnegative.
	Therefore,
	\begin{equation}
		\begin{aligned}
			L_{u,r}\!\big(\mathbf \Sigma_{\mathrm{sym}}^{-1}\big)
			\le
			\frac12\Big(
			L_{u,r}(\mathbf \Sigma^{-1})
			+
			L_{u,r}(\check{\mathbf \Sigma}^{-1})
			\Big).
		\end{aligned}
	\end{equation}
	Taking the maximum over $(u,r)\in\Gamma$ gives
	\begin{equation}
		\begin{aligned}
			\max_{(u,r)\in\Gamma}
			L_{u,r}\!\big(\mathbf \Sigma_{\mathrm{sym}}^{-1}\big)
			\le
			\frac12
			\max_{(u,r)\in\Gamma}
			\Big(
			L_{u,r}(\mathbf \Sigma^{-1})
			+
			L_{u,r}(\check{\mathbf \Sigma}^{-1})
			\Big).
		\end{aligned}
	\end{equation}
	Thus $F\!\big(w_{\mathrm r,\mathrm{sym}}(\cdot),w_{\mathrm t,\mathrm{sym}}(\cdot)\big)\le\frac12\Big(
	F\!\big(w_{\mathrm r}(\cdot),w_{\mathrm t}(\cdot)\big)
	+
	F\!\big(\check w_{\mathrm r}(\cdot),\check w_{\mathrm t}(\cdot)\big)$. To this end, let
	$\mathbf{D}=\mathrm{diag}(-1,1).$
	From the reflection property discussed above, we have
	$	\check{\mathbf \Sigma}=\mathbf{D}\mathbf \Sigma \mathbf{D}$,
	and therefore
	$
	\check{\mathbf \Sigma}^{-1}=\mathbf{D}\mathbf \Sigma^{-1}\mathbf{D}.
	$
	On the other hand, since $\mathbf{D}\mathbf{e}_1=-\mathbf{e}_1$ and
	$
	\mathbf{D}\mathbf{v}(u,r)=\mathbf{v}(-u,r),
	$
	we have
	$
	L_{u,r}(\mathbf{D}\mathbf{Z}\mathbf{D})=L_{-u,r}(\mathbf{Z}).
	$
	Thus,
	\begin{equation}
			F\!\big(\check w_{\mathrm r}(\cdot),\check w_{\mathrm t}(\cdot)\big)=
			\max_{(u,r)\in\Gamma}
			L_{u,r}(\mathbf{D}\mathbf \Sigma^{-1}\mathbf{D}) =
			\max_{(u,r)\in\Gamma}
			L_{-u,r}(\mathbf \Sigma^{-1}).
	\end{equation}
	Since $\Gamma$ is symmetric with respect to the transformation
	$(u,r)\mapsto(-u,r)$, the maximization set remains invariant, and hence
	$
	F\!\big(\check w_{\mathrm r}(\cdot),\check w_{\mathrm t}(\cdot)\big)
	=
	F\!\big(w_{\mathrm r}(\cdot),w_{\mathrm t}(\cdot)\big).
	$
	Substituting this into the previous inequality yields
	$
	F\!\big(w_{\mathrm r,\mathrm{sym}}(\cdot),w_{\mathrm t,\mathrm{sym}}(\cdot)\big)
	\le
	F\!\big(w_{\mathrm r}(\cdot),w_{\mathrm t}(\cdot)\big),
	$
	which proves Proposition~1.
	
	\section{Proof of Proposition~2}
	\label{prop.worstpoint-mn}
	Define $
	T(u,r)
	=
	\frac{k_1}{\Xi_1}
	+
	\frac{k_2}{\Xi_2},
	$
	such that the objective function in
	\eqref{eq:mono-SPEB-sym} is
	$\kappa(r)T(u,r)$.
	Since $\beta(r)$ is non-increasing in $r$,
	$\kappa(r)$ is non-decreasing in $r$.
	Moreover, $T(u,r)$ is strictly increasing in $r$ for every
	fixed $u$. Hence, for any $r_2>r_1$,
	\begin{equation}
		\begin{aligned}
			\kappa(r_2)T(u,r_2)
			\ge
			\kappa(r_1)T(u,r_2)
			>
			\kappa(r_1)T(u,r_1).
		\end{aligned}
	\end{equation}
	Therefore, the maximum over $r$ for every fixed $u$ is attained
	at the Rayleigh boundary
	$r=d_{\mathrm R}(u)
	=d_{\mathrm R,\max}(1-u^2)$.
	Substituting $r=d_{\mathrm{R},\max}(1-u^2)$, the coefficients $k_1,k_2$ and variables $\Xi_1, \Xi_2$ from \eqref{eq:mono-SPEB-sym} into
	$T(u,r)$ yields 
	\begin{equation}
		\begin{aligned}
			\label{App. eq. T(u)}
			T(u,d_\mathrm{R})
			=
			\frac{d_{\mathrm{R},\max}^2(1+3u^2)}{\Xi_1}
			+
			\frac{4d_{\mathrm {R},\max}^4(1-u^2)^2}{\Xi_2}.
		\end{aligned}
	\end{equation}
	Since $T(u,d_{\mathrm R})$ depends on $u$ only through
	$u^2\in[0,1]$ and
	$
	\frac{\mathrm d^2 T(u,d_{\mathrm R})}
	{\mathrm d (u^2)^2}
	=
	\frac{8d_{\mathrm R,\max}^4}{\Xi_2}
	>0,
	$
	it is strictly convex with respect to $u^2$. Therefore, its maximum
	over $u\in[-1,1]$ is attained at $u^2\in\{0,1\}$, i.e.,
	$u_{\mathrm{worst}}^2\in\{0,1\}$.
	It remains to compare the two endpoint values. Define
	$	\Delta_{01}
	=
	T\bigl(0, d_\mathrm{R}\bigr)
	-
	T\bigl(1, d_\mathrm{R}\bigr).$
	Then
	$
	\Delta_{01}
	=
	4d_{\mathrm {R},\max}^2
	\left(
	\frac{d_{\mathrm{R},\max}^2}{\Xi_2}
	-
	\frac{3}{4\Xi_1}
	\right).
	$
	
	Since both distributions are supported on $[-a,a]$, for each 
	$\ell\in\{\mathrm t, \mathrm r\}$, we have $	\mathrm{Var}(X_\ell^2)	\le	\mathbb{E}(X_\ell^4)	\le a^2\mathbb{E}(X_\ell^2).$
	Moreover, under centro-symmetry, $\mathbb{E}(X_\ell)=0$, and hence
	$\mathbb{E}(X_\ell^2)=\mathrm{Var}(X_\ell)$.
	Therefore,
	\begin{equation}
		\begin{aligned}
			\Xi_2
			&=
			\mathrm{Var}(X_{\mathrm r}^2)+\mathrm{Var}(X_{\mathrm t}^2)\\
			&\le
			a^2\mathrm{Var}(X_{\mathrm r})+a^2\mathrm{Var}(X_{\mathrm t})
			=
			a^2\Xi_1.
		\end{aligned}
	\end{equation}
	It follows that
	$
	\Delta_{01}
	\ge
	\frac{d_{\mathrm{R},\max}^2}{\Xi_1}
	\left(
	\frac{4d_{\mathrm {R},\max}^2}{a^2}-3
	\right).
	$
	Since practical MIMO arrays span multiple wavelengths, they
	naturally satisfy $a>\frac{\sqrt{3}}{16}\lambda$, meaning $d_{\mathrm{R},\max}>\frac{\sqrt{3}}{2}a$.
	Hence $\Delta_{01}\bigl(w_{\mathrm r}(\cdot),w_{\mathrm t}(\cdot)\bigr) >0$, i.e.,
	$
	T(0, d_\mathrm{R})
	>
	T(1, d_\mathrm{R}).
	$
	Moreover, since
	$d_{\mathrm R}(u)=d_{\mathrm R,\max}(1-u^2)
	\le d_{\mathrm R,\max}$
	and $\kappa(r)$ is non-decreasing in $r$, we have
	$
	\kappa\bigl(d_{\mathrm R}(u)\bigr)
	\le
	\kappa(d_{\mathrm R,\max}).
	$
	Consequently, for every $u\neq0$,
	$
	\kappa\bigl(d_{\mathrm R}(u)\bigr)
	T\bigl(u,d_{\mathrm R}(u)\bigr)
	<
	\kappa(d_{\mathrm R,\max})
	T(0,d_{\mathrm R,\max}).
	$
	Therefore, the worst-case target location is uniquely given by
	$
	\bm{\eta}_{\mathrm{worst}}
	=
	[\,0,\ d_{\mathrm R,\max}\,]^{\mathrm T},$
	which proves Proposition~2.
	\label{app.proposition2}
	\section{Proof of Lemma~3}
	\label{lemma:three-point}
	Before proving the lemma, we note that for any fixed feasible distribution on one side, replacing
	the distribution on the other side by another distribution that
	preserves all the relevant moments leaves the objective value unchanged.
	We first establish such a moment-preserving replacement for the
	receive-side distribution by invoking the Richter-Tchakaloff Theorem
	\cite{schmüdgen2017}.
	It says: 
	
	Suppose that $(\mathcal{Y},\mu)$ is a measure space, and let 
	$V \subset L^1_{\mathbb{R}}(\mathcal{Y},\mu)$ be a finite-dimensional 
	linear subspace with $\dim(V)$. 
	Let $L^\mu:V\to\mathbb{R}$ be the moment functional defined by
	\begin{equation}
		\begin{aligned}
			L^\mu(f) &= \int f \, d\mu, \qquad f\in V.
		\end{aligned}
	\end{equation}
	Then there exists a $k$-atomic measure
	\begin{equation}
		\begin{aligned}
			\nu &= \sum_{j=1}^k z_j \delta_{x_j} \in M_+(\mathcal{Y}), 
			\qquad k\le \dim(V),
		\end{aligned}
	\end{equation}
	such that $L^\mu = L^\nu$, i.e.,
	\begin{equation}
		\begin{aligned}
			\int f \, d\mu 
			&= \int f \, d\nu 
			\equiv \sum_{j=1}^k z_j f(x_j), 
			\qquad \forall f\in V.
		\end{aligned}
	\end{equation}
	where $x_j\in\mathcal{Y}$ and $z_j\ge0$ denote the support points and
	their associated weights, respectively. This theorem implies that, if an objective function and its constraints
	depend on a measure only through the moments associated with a
	finite-dimensional function space \(V\), then any feasible moment vector
	generated by a general positive measure can be exactly reproduced by a
	positive atomic measure supported on at most \(\dim(V)\) points. 
	Now we apply this theorem to prove Lemma~3.
	
	Observing \eqref{eq:opt-sym-broadside-mn}, on the receiver side, it depends on the measure $\mu$ solely through the moments $(m_0, m_2, m_4)$. Let $V = \operatorname{span}\{1, x^2, x^4\} \subset L^1_{\mathbb{R}}([-a,a])$ be the subspace with $\dim(V) = 3$. According to the Richter-Tchakaloff Theorem, for any arbitrary probability measure $\mu$ on $[-a,a]$, there exists a $k$-atomic measure $\nu$ with $k \le \dim(V) = 3$ such that $\nu$ preserves the moments of $\mu$, the objective values for both measures are identical:
	\begin{equation}
		\begin{aligned}
			\varphi \big(m_0(\nu),m_2(\nu),m_4(\nu)\big)
			= \varphi \big(m_0(\mu),m_2(\mu),m_4(\mu)\big),
		\end{aligned}
	\end{equation}
	where $\varphi(\cdot)$ represents the mapping from the moment vector $(m_0, m_2, m_4)$ to the objective value.
	Similarly, the transmit side depends on the measure solely through the moments $(s_0, s_2, s_4)$ for any $w_{\mathrm r}(\cdot)$, thereby the optimal transmit-side distribution can also be restricted to utmost three atoms. Consequently, for both the distributions $w_{\mathrm t}(\cdot)$ and $w_{\mathrm r}(\cdot)$, the performance achievable by a general probability measure can be equally attained by a discrete measure supported on at most three points. Therefore we can restrict the optimization domain to the set of measures with at most three atoms. This completes the proof.
	\label{app.lemma3}
	\section{Proof of Lemma~4}
	\label{app.lemma4}
	Consider an arbitrary feasible pair of centro-symmetric distributions
	$\big(w_{\mathrm r}(\cdot),w_{\mathrm t}(\cdot)\big)$ in problem
	(P3). We first fix $w_{\mathrm t}(\cdot)$ and show that
	$w_{\mathrm r}(\cdot)$ can be replaced by a centro-symmetric
	distribution supported on $\{-a,0,+a\}$ without increasing the
	objective value.
	Let $\mu$ be the probability measure associated with
	$w_{\mathrm r}(\cdot)$. Since $x^4\leq a^2x^2$ for all
	$x\in[-a,a]$, we have
	$
	m_4\leq a^2m_2.
	$
	By the Cauchy-Schwarz inequality,
	\begin{equation}
		m_2^2
		= \Bigl(\int x^2 d\mu(x)\Bigr)^2
		\leq \Bigl(\int x^4 d\mu(x)\Bigr)
		\Bigl(\int 1\,d\mu(x)\Bigr)
		= m_4.
	\end{equation}
	Let $\mathcal{J}\bigl(w_{\mathrm r},w_{\mathrm t}\bigr)$ denote the
	objective function in \eqref{eq:mono-SPEB-sym}. For any fixed feasible
	transmit-side distribution $w_{\mathrm t}(\cdot)$ and fixed $m_2$,
	$\mathcal{J}\bigl(w_{\mathrm r},w_{\mathrm t}\bigr)$ is strictly decreasing over
	$m_4 \in [m_2^2,a^2m_2]$.
	The largest feasible value
	$m_4=a^2m_2$ is attained by the centro-symmetric probability distribution $\tilde w_{\mathrm r}$ supported on $\{-a,0,+a\}$ with probability masses
	$\{\frac{m_2}{2a^2},\left(1-\frac{m_2}{a^2}\right),	\frac{m_2}{2a^2}\}$,
	whose second-
	and fourth-order moments satisfy
	$
	\tilde m_2=m_2,
	$ and $
	\tilde m_4=a^2m_2\geq m_4.
	$
	Consequently,
	$
	\tilde m_2+s_2=m_2+s_2
	$
	and
	$
	(\tilde m_4-\tilde m_2^2)+(s_4-s_2^2)
	\geq
	(m_4-m_2^2)+(s_4-s_2^2).
	$
	Therefore
	$
	\mathcal{J}\bigl(\tilde w_{\mathrm r},w_{\mathrm t}\bigr)
	\leq
	\mathcal{J}\bigl(w_{\mathrm r},w_{\mathrm t}\bigr).
	$
	The same construction applies to the transmit-side distribution for
	any fixed feasible receive-side distribution. Therefore, there exists
	a centro-symmetric distribution $\tilde w_{\mathrm t}(\cdot)$
	supported on $\{-a,0,+a\}$ such that
	$
			\mathcal{J}\bigl(
			\tilde w_{\mathrm r},\tilde w_{\mathrm t}
			\bigr)
			\leq
			\mathcal{J}\bigl(
			\tilde w_{\mathrm r},w_{\mathrm t}
			\bigr)
			\leq
			\mathcal{J}\bigl(
			w_{\mathrm r},w_{\mathrm t}
			\bigr).
	$
	Consequently, every feasible distribution pair admits a pair of
	centro-symmetric distributions with support contained in
	$\{-a,0,+a\}$ and with no larger objective value. Hence, the minimum
	of problem (P3) can be attained within $\{-a,0,+a\}$. This
	completes the proof.
	\section{Proof of Lemma~\ref{lem:qtqr_active}}
	\label{app.lem:qtqr_active}
	For any feasible pair \((q_{\mathrm r},q_{\mathrm t})\), define $\bar q=\frac{q_{\mathrm r}+q_{\mathrm t}}{2}$.
	The term $\frac{d_{\mathrm{R},\max}^2}
	{a^2(q_{\mathrm r}+q_{\mathrm t})}$ of 
	\eqref{eq:active_J_qt_qr}, which depends only on $q_{\mathrm r}+q_{\mathrm t}$, remains unchanged when 
	$(q_{\mathrm r},q_{\mathrm t})$ is replaced by $(\bar q,\bar q)$. It remains to compare the first denominator. We have
	\begin{equation}
		\begin{aligned}
			&q_{\mathrm r}(1-q_{\mathrm r})+q_{\mathrm t}(1-q_{\mathrm t})
			=2\bar q(1-\bar q)-\frac{(q_{\mathrm r}-q_{\mathrm t})^2}{2}
			\leq 2\bar q(1-\bar q),
		\end{aligned}
	\end{equation}
	with equality if and only if \(q_{\mathrm r}=q_{\mathrm t}\). Let $H(q_{\mathrm r},q_{\mathrm t})$ denote the objective function
	in \eqref{eq:active_J_qt_qr}. Since the coefficient of the reciprocal 
	of this denominator in \(H(q_{\mathrm r},q_{\mathrm t})\) is positive, it follows that $H(q_{\mathrm r},q_{\mathrm t})\geq H(\bar q,\bar q),$ with strict inequality whenever \(q_{\mathrm r}\neq q_{\mathrm t}\). Therefore, every global minimizer of \eqref{eq:active_J_qt_qr} must satisfy $q_{\mathrm r}^\star=q_{\mathrm t}^\star$.
	This completes the proof.
	\bibliographystyle{IEEEtran}
	\bibliography{references}   

@book{kay1993fssp_estimation,
	author    = {Kay, Steven M.},
	title     = {Fundamentals of Statistical Signal Processing: Estimation Theory},
	publisher = {PTR Prentice-Hall},
	year      = {1993},
	isbn      = {0133457117}
}

@article{ChenJCIN2025,
	author       = {Lin Chen and
	Chang Cai and
	Huiyuan Yang and
	Xiaojun Yuan and
	Ying{-}Jun Angela Zhang},
	title        = {Scoring {ISAC:} Benchmarking Integrated Sensing and Communications
	via Score-Based Generative Modeling},
	journal      = {J. Commun. Inf. Networks},
	volume       = {10},
	number       = {3},
	pages        = {224-245},
	month = {Sept.},
	year         = {2025}
}

@book{Balanis,
	author    = {C. A. Balanis},
	title     = {Antenna Theory Analysis and Design},
	publisher = {Wiley-Interscience},
	year      = {2005},
	isbn      = {978-1-118-64206-1}
}

@misc{Song2026arxiv,
	title={Integrated Sensing and Communications for Low-Altitude Economy with Deterministic Sensing and {Gaussian} Information Signals}, 
	author={Xianxin Song and Xianghao Yu and Jie Xu and Derrick Wing Kwan Ng},
	year={2026},
	eprint={2604.19040},
	archivePrefix={arXiv},
	primaryClass={eess.SP},
	url={https://arxiv.org/abs/2604.19040}, 
}

@ARTICLE{MaICST2026,
		author={Ma, Wenyan and Zhu, Lipeng and Tan, Yanhua and Zheng, Beixiong and Zhang, Yujie and Zhang, Yuchen and Ying, Keke and Gao, Zhen and Sun, He and Shao, Xiaodan and Xiao, Zhenyu and Niyato, Dusit and Zhang, Rui},
		journal={IEEE Commun. Surveys Tuts.}, 
		title={A Survey on Reconfigurable and Movable Antennas for Wireless Communications and Sensing}, 
		year={2026},
		volume={28},
		number={},
		pages={4842-4882}}

@ARTICLE{YuanOJAP2023,
	author={Yuan, Shuai S. A. and Chen, Xiaoming and Huang, Chongwen and Sha, Wei E. I.},
	journal={IEEE Open J. Antennas Propag.}, 
	title={Effects of Mutual Coupling on Degree of Freedom and Antenna Efficiency in Holographic {MIMO} Communications}, 
	year={2023},
	volume={4},
	number={},
	pages={237-244}}

@ARTICLE{NewCST2025,
		author={New, Wee Kiat and Wong, Kai-Kit and Xu, Hao and Wang, Chao and Ghadi, Farshad Rostami and Zhang, Jichen and Rao, Junhui and Murch, Ross and Ramírez-Espinosa, Pablo and Morales-Jimenez, David and Chae, Chan-Byoung and Tong, Kin-Fai},
		journal={IEEE Commun. Surveys Tuts.}, 
		title={A Tutorial on Fluid Antenna System for {6G} Networks: Encompassing Communication Theory, Optimization Methods and Hardware Designs}, 
		month = {Aug.},
		year={2025},
		volume={27},
		number={4},
		pages={2325-2377}}

@ARTICLE{DuCST2025,
	author={Du, Rui and Hua, Haocheng and Xie, Hailiang and Song, Xianxin and Lyu, Zhonghao and Hu, Mengshi and Narengerile and Xin, Yan and McCann, Stephen and Montemurro, Michael and Han, Tony Xiao and Xu, Jie},
	journal={IEEE Commun. Surveys Tuts.}, 
	title={An Overview on {IEEE} 802.11bf: {WLAN} Sensing}, 
	month = {Feb.},
	year={2025},
	volume={27},
	number={1},
	pages={184-217}}

@ARTICLE{NionTSP2010,
		author={Nion, Dimitri and Sidiropoulos, Nicholas D.},
		journal={IEEE Trans. Signal Process.}, 
		title={Tensor Algebra and Multidimensional Harmonic Retrieval in Signal Processing for {MIMO} Radar}, 
		month = {Nov.},
		year={2010},
		volume={58},
		number={11},
		pages={5693-5705},
		doi={10.1109/TSP.2010.2058802}}

@ARTICLE{JornetProIEEE2025,
	author={Jornet, Josep M. and Petrov, Vitaly and Wang, Hua and Popović, Zoya and Shakya, Dipankar and Siles, Jose V. and Rappaport, Theodore S.},
	journal={Proc. IEEE}, 
	title={The Evolution of Applications, Hardware Design, and Channel Modeling for Terahertz ({THz}) Band Communications and Sensing: Ready for 6{G}?}, 
	month = {Sept.},
	year={2025},
	volume={113},
	number={9},
	pages={920-951}}

@ARTICLE{ZhuCST,
	author={Zhu, Lipeng and Ma, Wenyan and Mei, Weidong and Zeng, Yong and Wu, Qingqing and Ning, Boyu and Xiao, Zhenyu and Shao, Xiaodan and Zhang, Jun and Zhang, Rui},
	journal={IEEE Commun. Surveys Tuts.}, 
	title={A Tutorial on Movable Antennas for Wireless Networks}, 
	month = {4th Quart.},
	year={2025},
	volume={28},
	number={},
	pages={3002-3054}}

@ARTICLE{ZhuCM,
	author={Zhu, Lipeng and Ma, Wenyan and Zhang, Rui},
	journal={IEEE Commun. Mag.}, 
	title={Movable Antennas for Wireless Communication: Opportunities and Challenges}, 
	month = {June},
	year={2024},
	volume={62},
	number={6},
	pages={114-120}}

@misc{wang202512arxiv,
	title={Movable Antenna Empowered Near-Field Sensing via Antenna Position Optimization}, 
	author={Yushen Wang and Weidong Mei and Xin Wei and Ya Fei Wu and Zhi Chen and Boyu Ning},
	year={2025},
	eprint={2512.00758},
	archivePrefix={arXiv},
	primaryClass={cs.IT},
	url={https://arxiv.org/abs/2512.00758}, 
	}

@ARTICLE{DingTWC2026,
	author={Ding, Jingze and Zhou, Zijian and Shao, Xiaodan and Jiao, Bingli and Zhang, Rui},
	journal={IEEE Trans. Wireless Commun.}, 
	title={Movable Antenna-Aided Near-Field Integrated Sensing and Communication}, 
	year={2026},
	volume={25},
	number={},
	pages={493-508}}

@ARTICLE{SunIoTJ2025,
	author={Sun, Yunan and Xu, Hao and Ouyang, Chongjun and Yang, Hongwen},
	journal={IEEE Internet Things J.}, 
	title={Rotatable and Movable Antenna-Enabled Near-Field Integrated Sensing and Communication}, 
	month={Nov.},
	year={2025},
	volume={12},
	number={21},
	pages={45119-45132}}

@ARTICLE{GazzahTAP2014,
	author={Gazzah, Houcem and Delmas, Jean Pierre},
	journal={IEEE Trans. Antennas Propag.}, 
	title={{CRB}-Based Design of Linear Antenna Arrays for Near-Field Source Localization}, 
	month = {Apr.},
	year={2014},
	volume={62},
	number={4},
	pages={1965-1974}}

@INPROCEEDINGS{liumeditcom2026,
	author={Liu, Jinjian and Song, Xianxin and Yu, Xianghao},
	booktitle={Proc. IEEE Int. Mediterranean Conf. Commun. Netw. (MeditCom)}, 
	title={Optimal Movable Antenna Placement for Near-Field Wireless Sensing}, 
	address = {Cagliari, Italy},
	year={2026},
	volume={},
	number={},
	pages={}}

@ARTICLE{MaTWC0824,
	author={Ma, Wenyan and Zhu, Lipeng and Zhang, Rui},
	journal={IEEE Trans. Wireless Commun.}, 
	title={Movable Antenna Enhanced Wireless Sensing via Antenna Position Optimization}, 
	month={Nov.},
	year={2024},
	volume={23},
	number={11},
	pages={16575-16589}}

@ARTICLE{liuOJCS2023,
	author={Liu, Yuanwei and Wang, Zhaolin and Xu, Jiaqi and Ouyang, Chongjun and Mu, Xidong and Schober, Robert},
	journal={IEEE Open J. Commun. Soc.}, 
	title={Near-Field Communications: A Tutorial Review}, 
	month = {Aug.},
	year={2023},
	volume={4},
	number={},
	pages={1999-2049}}

@ARTICLE{LuTWC2022,
	author={Lu, Haiquan and Zeng, Yong},
	journal={IEEE Trans. Wireless Commun.}, 
	title={Communicating With Extremely Large-Scale Array/Surface: Unified Modeling and Performance Analysis}, 
	month = {June},
	year={2022},
	volume={21},
	number={6},
	pages={4039-4053}}

@ARTICLE{JoJSAC2023,
	author={Jo, Joo-Hyun and Shim, Jae-Nam and Kim, Byoungnam and Chae, Chan-Byoung and Kim, Dong Ku},
	journal={IEEE J. Sel. Areas Commun.}, 
	title={{AoA}-Based Position and Orientation Estimation Using Lens {MIMO} in Cooperative Vehicle-to-Vehicle Systems}, 
	month = {Oct.},
	year={2023},
	volume={41},
	number={12},
	pages={3719-3735}}

@ARTICLE{ZhangTWC2022,
	author={Zhang, Haiyang and Shlezinger, Nir and Guidi, Francesco and Dardari, Davide and Imani, Mohammadreza F. and Eldar, Yonina C.},
	journal={IEEE Trans. Wireless Commun.}, 
	title={Beam Focusing for Near-Field Multiuser {MIMO} Communications}, 
	month={Sept.},
	year={2022},
	volume={21},
	number={9},
	pages={7476-7490}}

@ARTICLE{ShiTSP2021,
	author={Shi, Wanlu and Vorobyov, Sergiy A. and Li, Yingsong},
	journal={IEEE Trans. Signal Process.}, 
	title={{ULA} Fitting for Sparse Array Design}, 
	month = {Nov.},
	year={2021},
	volume={69},
	number={},
	pages={6431-6447}}

@ARTICLE{GodrichTSP0711,
		author={Godrich, Hana and Petropulu, Athina P. and Poor, H. Vincent},
		journal={IEEE Trans. Signal Process.}, 
		title={Power Allocation Strategies for Target Localization in Distributed Multiple-Radar Architectures}, 
		month={July},
		year={2011},
		volume={59},
		number={7},
		pages={3226-3240}}

@ARTICLE{WangTSP0124,
		author={Wang, Huizhi and Xiao, Zhiqiang and Zeng, Yong},
		journal={IEEE Trans. Signal Process.}, 
		title={Cramér-{Rao} Bounds for Near-Field Sensing With Extremely Large-Scale {MIMO}}, 
		month = {Jan.},
		year={2024},
		volume={72},
		number={},
		pages={701-717}}

@ARTICLE{BoyerTSP,
		author={Boyer, Rémy},
		journal={IEEE Trans. Signal Process.}, 
		title={Performance Bounds and Angular Resolution Limit for the Moving Colocated {MIMO }Radar}, 
		month = {Apr.},
		year={2011},
		volume={59},
		number={4},
		pages={1539-1552}}

@ARTICLE{BekkermanTSP,
		author={Bekkerman, I. and Tabrikian, J.},
		journal={IEEE Trans. Signal Process.}, 
		title={Target Detection and Localization Using {MIMO} Radars and Sonars}, 
		month = {Oct.},
		year={2006},
		volume={54},
		number={10},
		pages={3873-3883}}

@ARTICLE{LiTSP2008,
	author={Li, Jian and Xu, Luzhou and Stoica, Petre and Forsythe, Keith W. and Bliss, Daniel W.},
	journal={IEEE Trans. Signal Process.}, 
	title={Range Compression and Waveform Optimization for {MIMO} Radar: A {Cramér-Rao} Bound Based Study}, 
	month={Jan.},
	year={2008},
	volume={56},
	number={1},
	pages={218-232}}

@ARTICLE{YangTCCN1025,
		author={Yang, Yinchao and Zhou, Jingxuan and Yang, Zhaohui and Shikh-Bahaei, Mohammad R.},
		journal={IEEE Trans. Cogn. Commun. Netw.}, 
		title={Fluid Antenna-Enabled Near-Field Integrated Sensing, Computing, and Semantic Communication for Emerging Applications}, 
		month = {Oct.},
		year={2025},
		volume={11},
		number={5},
		pages={3062-3078}}

@ARTICLE{WerfTSP2026,
		author={van der Werf, Ids and Rajamäki, Robin and Leus, Geert},
		journal={IEEE Trans. Signal Process.}, 
		title={{CRB}-Optimal Arrays and Waveforms in Active Sensing: Role of Redundancy and Spatial Covariance of Array Geometry}, 
		month = {June},
    	year={2026},
        volume={74},
        number={},
        pages={2691-2705}}

@INPROCEEDINGS{ZhouGlobalcom2024,
		author={Zhou, Jingxuan and Yang, Yinchao and Yang, Zhaohui and Xu, Wei and Shikh-Bahaei, Mohammad},
		booktitle={Proc. IEEE Global Commun. Conf. Wkshps. (GLOBECOM Wkshps)}, 
		title={Location Optimization for Fluid Antenna-Assisted Near-Field System}, 
		address = {Cape Town, South Africa},
		year={2024},
		volume={},
		number={},
		pages={1-6}}

@ARTICLE{ZhengWC2026,
		author={Yang, Zheng and Wang, Ning and Sun, Yanshi and Ding, Zhiguo and Schober, Robert and Karagiannidis, George K. and Wong, Vincent W.S. and Dobre, Octavia A.},
		journal={IEEE Wireless Commun.}, 
		title={Pinching Antennas: Principles, Applications and Challenges},
		month = {Apr.}, 
		year={2026},
		volume={33},
		number={2},
		pages={175-184}}

@ARTICLE{Wintit2010,
	author={Shen, Yuan and Win, Moe Z.},
	journal={IEEE Trans. Inf. Theory}, 
	title={Fundamental Limits of Wideband Localization— {P}art {I}: A General Framework}, 
	month = {Sept.},
	year={2010},
	volume={56},
	number={10},
	pages={4956-4980}}

@book{schmüdgen2017,
	title={The Moment Problem},
	author={Schm{\"u}dgen, K.},
	isbn={9783319645469},
	year={2017},
	publisher={Springer International Publishing}
	}
	
\end{document}